\newtheorem{thm}{Theorem}
\newtheorem{lem}{Lemma}
\newtheorem{cor}{Corollary}
\theoremstyle{definition}
\newtheorem*{thm1}{Theorem}
\newtheorem*{rem2}{Remark}
\newtheorem*{acknowledgement}{Acknowledgement}
\newcommand{\vertiii}[1]{{\left\vert\kern-0.25ex\left\vert\kern-0.25ex\left\vert
		#1 \right\vert\kern-0.25ex\right\vert\kern-0.25ex\right\vert}}
\def \lim   {\text {\rm lim}}
\begin{document}
	
	\title[]{Universal recoverability of quantum states in tracial von-Neumann algebras}
	
	\author{Saptak Bhattacharya}
	
	\address{Indian Statistical Institute\\
		New Delhi 110016\\
		India}
	\email{saptak21r@isid.ac.in}
	
	
	
	
	\begin{abstract}
		In this paper, we discuss a refinement of quantum data processing inequality for the sandwiched quasi-relative entropy $\mathcal{S}_2$ on a tracial von-Neumann algebra. The main result is a universal recoverability bound with the Petz recovery map, which was previously obtained in the finite dimensional setup. 
	\end{abstract}
	\subjclass[2020]{ 81P17, 46L52}
	
	\keywords{entropy, recoverability}
	\date{}
	\maketitle
	
	\section{introduction} Quantum entropies are used as discrimination measures between quantum states. They are widely studied both in the finite and infinite dimensional contexts. An extremely important example is the Kullback-Liebler divergence, defined for density matrices $A$ and $B$ with $\mathrm{supp}\, A\subset\mathrm{supp}\, B$ by $$D(A|B)=\mathrm{tr}\, A(\ln\, A-\ln\, B).$$  Lindblad in \cite{gl} proved that for any completely positive, trace preserving map (also called a quantum channel) $\phi:M_n(\mathbb{C})\to M_k(\mathbb{C})$, \[D(\phi(A)|\phi(B))\leq D(A|B).\label{e1}\tag{1}\] This is known as the {\it data processing inequality} or DPI. Over time, several other entropies and their data processing inequalities have been introduced and studied, namely, the R\'enyi divergences \cite{gl2, pt}, the $f-$ divergences \cite{pt} and the sandwiched quasi-relative entropies \cite{ml, ww, sb}.
	\medskip
	
	It is natural to ask when equality holds in $\eqref{e1}$. The following was proved by Petz (see \cite{hp, pr}) :
	\medskip
	
	\begin{thm1}Let $B\in M_n(\mathbb{C})$ be a density matrix and let $\phi:M_n(\mathbb{C})\to M_k(\mathbb{C})$ be a quantum channel. Let $\mathcal{R}:M_k(\mathbb{C})\to M_n(\mathbb{C})$ be given by $$\mathcal{R}(Y)=B^{1/2}\phi^*(\phi(B)^{-1/2}Y\phi(B)^{-1/2})B^{1/2}.$$ Then, for some density matrix $A\in M_n(\mathbb{C})$, $D(A|B)=D(\phi(A)|\phi(B))$ if and only if $\mathcal{R}(\phi(A))=A$. \end{thm1}
	\medskip
	
	\begin{rem2}The map $\mathcal{R}$ above automatically satisfies $\mathcal{R}(\phi(B))=B$. It is famously called the {\it Petz recovery map}.\end{rem2}
	\medskip
	
	Over time, versions of Petz's theorem have been proved for some specific $f$-divergences \cite{hp}, and for all sandwiched quasi-relative entropies \cite{jen}. These theorems tell us that the {\it universal} recovery map $\mathcal{R}$ recovers any state $A$ for which the distinguishability information within the entropy is preserved. However, perfect recovery cannot be expected in real life, which is why it became necessary to prove refinements of the DPI for these entropies capturing stability.
	\medskip
	
	This turned out to be a difficult problem. Seshadreesan, Wilde and Berta conjectured in 2015 that  \[-2\ln \big[F(A|\mathcal{R}(\phi(A)))\big]\leq D(A|B)-D(\phi(A)|\phi(B)).\label{e2}\tag{2}\] where $D$ is the $K-L$ divergence and $F$ is the fidelity given by $$F(A|B)=\mathrm{tr}\,|A^{1/2}B^{1/2}|$$ for density matrices $A$ and $B$. Later, Junge et. al. \cite{jw} proved the existence of a recovery map $\mathcal{R}^{\prime}$ obtained as an average of {\it rotated} Petz maps such that  \[-2\ln \big[F(A|\mathcal{R}^{\prime}(\phi(A)))\big]\leq D(A|B)-D(\phi(A)|\phi(B)).\] This was later generalized to infinite dimensions by Faulkner et. al. \cite{fh, fh2}.
	\medskip
	
	However, these results lacked a bound with the Petz map itself. The conjectured inequality $\eqref{e2}$ remained unsolved until recently, when it was disproved by us \cite{sap}. 
	\medskip
	
	In 2020, Carlen and Vershynina \cite{cv} proved the first approximate recoverability bound with the Petz map and the $K-L$ divergence. Subsequent work (see \cite{ver, gw}) expanded their novel techniques and generalized it to $f$-divergences and optimized $f$-divergences. However, these results was not universal since they involved an unbounded factor dependent on $A$ before the entropy difference $D(A|B)-D(\phi(A)|\phi(B))$
	\medskip
	
	Later, Cree and Sorce worked with the sandwiched quasi-relative entropy $\mathcal{S}_2$ and proved the bound \[\begin{aligned}4\big[1-F(A|\mathcal{R}\circ\phi(A))\big]^2&\leq ||A-\mathcal{R}\circ\phi(A)||_1^2\\&\leq ||B||^2_2||B^{-1}||[\mathcal{S}_2(A|B)-\mathcal{S}_2(\phi(A)|\phi(B))].\end{aligned}\] This was the first universal stability result with the Petz map. Recently, Gao et. al. in \cite{lg} and we in \cite{sap} independently refined the inequality and obtained the much simpler looking bound : \[4[1-F(A|\mathcal{R}\circ\phi(A))]^2\leq||A-\mathcal{R}\circ\phi(A)||_1\leq [\mathcal{S}_2(A|B)-\mathcal{S}_2(\phi(A)|\phi(B))].\label{e3}\tag{3}\] 
	\medskip
	
	Our proof of inequality $\eqref{e3}$ in \cite{sap} is based on the Araki-Masuda Hilbert space geometry underlying $\mathcal{S}_2$. As we demonstrate in this paper, the advantage of this technique is that we can use it to generalize inequality $\eqref{e3}$ to infinite dimensions. To the best of our knowledge, this is the first {\it universal} approximate recoverability bound in infinite dimensions with the Petz map.
	\medskip
	
	We work on a von-Neumann algebra $\mathcal{M}$ over a Hilbert space $\mathcal{H}$ with a normal, faithful tracial state $\tau:\mathcal{M}\to\mathbb{C}$ (see \cite{kad2} for details). $\tau$ induces a family of $p$-norms for $p\geq 1$ given by $$||X||_p=\tau(|X|^p)^{1/p}$$ for all $X\in\mathcal{M}$. The completion of $\mathcal{M}$ with respect to $||.||_p$ is the corresponding non-commutative $L^p$ space. A concrete description of this space can be found in \cite{nel, sil}, we give a brief summary here. 
	\medskip
	
	Consider a closed densely defined operator $A:D(A)\to\mathcal{H}$. Take the polar decomposition $A=V|A|$. $A$ is said to be {\it affiliated} to $\mathcal{M}$ if 
	the partial isometry $V$ and all the spectral projections of $|A|$ lie in $\mathcal{M}$. Given a Borel $E\subset\mathbb{R}$ let $P(E)$ be the corresponding spectral projection of $|A|$. Write $$|A|=\int_0^{\infty}\lambda dP(\lambda).$$ Then the non-commutative $L^p$ space $L^p(\mathcal{M}, \tau)$ consists of all closed operators $A$ affiliated to $\mathcal{M}$ such that $$\int_{0}^{\infty}\lambda^p dP(\lambda)\,\textless\,\infty.$$ The norm extends naturally to $L^p(\mathcal{M},\tau)$ by $$||A||_p=[\int_0^{\infty}\lambda^p dP(\lambda)]^{1/p}.$$ $\tau$ itself  extends similarly. This makes $L^p(\mathcal{M},\tau)$ a Banach space, and a Hilbert space for $p=2$.  
	\medskip
	
	Let $\psi:M\to\mathbb{C}$ be a normal state. Then there exists a positive $A\in L^1(\mathcal{M},\tau)$ such that $\tau(A)=1$ and $\psi(X)=\tau(AX)$ for all $X\in\mathcal{M}$. These representatives play the role of density matrices. Let $B\in\mathcal{M}$ be positive and invertible with $\tau(B)=1$. Given any positive $A\in L^p(\mathcal{M},\tau)$ with $\tau(A)=1$ we define the sandwiched quasi-relative entropy $\mathcal{S}_p$ by $$\mathcal{S}_p(A|B)=\tau[(B^{-1/2q}AB^{-1/2q})^p]$$ where $q$ is the harmonic conjugate of $p$ satisfying $\frac{1}{p}+\frac{1}{q}=1$.
	\medskip
	
	Like the case for matrices, we model quantum channels by completely positive, trace preserving maps $\phi$ from $\mathcal{M}$ to some other von-Neumann algebra $\mathcal{N}$ with a faithful, normal tracial state $\tau^{\prime}$. The DPI for $\mathcal{S}_p$ has been discussed in \cite{jencova}.
	\medskip
	
	Given a quantum channel $\phi:(M\tau)\to(N,\tau^{\prime})$ we can extend it to the Hilbert space $L^2(\mathcal{M},\tau)$. Its adjoint $\phi^*$ maps $L^2(\mathcal{N},\tau^{\prime})$ to $L^2(M,\tau)$. $\phi$ is said to be strictly CPTP if it takes positive invertible elements to postive invertible elements. Given a strict CPTP map $\phi:(\mathcal{M},\tau)\to(\mathcal{N},\tau^{\prime})$ and a positive, invertible $B\in \mathcal{M}$, define the Petz recovery map $\mathcal{R}:L^2(\mathcal{N},\tau^{\prime})\to L^2(\mathcal{M},\tau)$ by \[\mathcal{R}(Y)=B^{1/2}\phi^*(\phi(B)^{-1/2}Y\phi(B)^{-1/2})B^{1/2}.\label{e4}\tag{4}\]

	Note that even though $\phi^*(\phi(B)^{-1/2}Y\phi(B)^{-1/2})$ might be unbounded, multiplication on both sides with a {\it bounded} operator makes sense in the non-commutative $L^2$ limit. 
	\medskip
	
	In this paper, which we keep as self-contained as possible, the inequality \[||A-\mathcal{R}(\phi(A))||^2_1\leq \mathcal{S}_2(A|B)-\mathcal{S}_2(\phi(A)|\phi(B))\label{e5}\tag{5}\] is proved for any positive $A\in L^2(\mathcal{M},\tau)$ with $\tau(A)=1$. Note that the left-hand side makes sense since $A-\mathcal{R}(\phi(A))\,\in\, L^2(M,\tau)\subset L^1(\mathcal{M},\tau)$. This generalizes the right hand side of inequality $\eqref{e3}$ to infinite dimensions, giving a universal recoverability bound with the Petz map. 
	\medskip
	
One might now wonder whether the left hand side generalizes as well. For that, we need to use an infinite dimensional version of the fidelity. Given two states $\psi_1$ and $\psi_2$ on a unital $C^*$-algebra $\mathcal{A}$, the {\it transition probability} was defined by Uhlmann \cite{uhl2} as \[
\begin{aligned}P(\psi_1,\psi_2)
&:= \sup_{\pi\in\mathrm{Hom}(\mathcal{A}, B(\mathcal{H}))}\{|\langle x, y\rangle|^2:\psi_1(A)=\langle\pi(A)x,x\rangle,\,\\&\psi_2(A)=\langle\pi(A)y, y\rangle,\, A\in\mathcal{A},\,||x||=||y||=1 \}.
\end{aligned}\label{e6}\tag{6}\]

The supremum runs over all possible {\it common} GNS representations of the states $\psi_1$ and $\psi_2$. The fidelity $F(\psi_1|\psi_2)$ is then defined as $\sqrt{P(\psi_1|\psi_2)}$ (see \cite{uhal}). This reduces to the fidelity between density matrices in the finite dimensional setup. We discuss $F(\psi_1|\psi_2)$ in the appendix, where we give an elementary proof of joint concavity in this setup.
\medskip

The Uhlmann fidelity for states helps us deduce the inequality \[4[1-F(\psi_{A}|\psi_{\,\mathcal{R}(\phi(A)}))]^2\leq||A-\mathcal{R}\circ\phi(A)||^2_1\leq [\mathcal{S}_2(A|B)-\mathcal{S}_2(\phi(A)|\phi(B))]\] where $\psi_A$ and $\psi_{\,\mathcal{R}(\phi(A))}$ are the normal states corresponding to $A$ and $\mathcal{R}(\phi(A))$ respectively. This is the main result of this paper, giving a universal approximate recoverability bound in infinite dimensions. 

	\section{Main results}
	
	We need to be careful with the technical details in infinite dimensions. The first step is to show that a quantum channel $\phi$ is indeed $L^2$-continuous, so that the adjoint makes sense.
	\medskip
	
	\begin{thm}\label{t1}Let $(\mathcal{M},\tau)$ and $(\mathcal{N},\tau^{\prime})$ be tracial von-Neumann algebras and let $\phi:(\mathcal{M},\tau)\to(\mathcal{N},\tau^{\prime})$ be CPTP. Then $\phi$ is $L^2$-bounded.\end{thm}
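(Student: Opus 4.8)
\medskip

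The plan is to prove the quantitative estimate $\|\phi(X)\|_{2,\tau'}\le\|\phi(1)\|^{1/2}\,\|X\|_{2,\tau}$ for every $X\in\mathcal{M}$ and then extend $\phi$ by density. The engine is the Kadison--Schwarz (Choi) inequality for $2$-positive maps together with trace preservation.

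First I would note that, being completely positive, $\phi$ is in particular $2$-positive, and $\phi(1)\in\mathcal{N}$ is a bona fide bounded positive element with $\tau'(\phi(1))=\tau(1)=1$. Fix $X\in\mathcal{M}$. Applying $\phi\otimes\mathrm{id}_{M_2(\mathbb{C})}$ to the positive element
\[
\begin{pmatrix}1&X\\X^*&X^*X\end{pmatrix}=\begin{pmatrix}1\\X^*\end{pmatrix}\begin{pmatrix}1&X\end{pmatrix}\ \geq\ 0
\]
of $M_2(\mathcal{M})$ yields
\[
\begin{pmatrix}\phi(1)&\phi(X)\\\phi(X)^*&\phi(X^*X)\end{pmatrix}\ \geq\ 0
\]
in $M_2(\mathcal{N})$. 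Since $\phi(1)\le\|\phi(1)\|\cdot 1$, enlarging the $(1,1)$-corner keeps the matrix positive, and the Schur-complement criterion then gives the operator inequality $\phi(X)^*\phi(X)\le\|\phi(1)\|\,\phi(X^*X)$ in $\mathcal{N}$. One must absorb $\phi(1)$ into the scalar \emph{before} taking the Schur complement, as $\phi(1)$ need not be invertible.

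Applying the tracial state $\tau'$ — which is positive and finite on all of $\mathcal{N}$, hence in particular on the bounded operator $\phi(X^*X)$ — and using trace preservation $\tau'\circ\phi=\tau$, we get
\[
\|\phi(X)\|_{2,\tau'}^2=\tau'\!\big(\phi(X)^*\phi(X)\big)\le\|\phi(1)\|\,\tau'\!\big(\phi(X^*X)\big)=\|\phi(1)\|\,\tau(X^*X)=\|\phi(1)\|\,\|X\|_{2,\tau}^2 .
\]
Since $\tau$ is a faithful normal finite trace, $\mathcal{M}$ is dense in $L^2(\mathcal{M},\tau)$, so $\phi$ is uniformly continuous on a dense subspace and extends uniquely to a bounded map $L^2(\mathcal{M},\tau)\to L^2(\mathcal{N},\tau')$ with $\|\phi\|_{2\to2}\le\|\phi(1)\|^{1/2}$; in particular the Hilbert-space adjoint $\phi^*$ in \eqref{e4} makes sense.

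I do not expect a serious obstacle. The only points needing care are the non-unital form of the Kadison--Schwarz inequality (handled above) and keeping the evaluation of $\tau'$ on bounded operators so that no $L^1$-integrability question arises. An alternative route, worth mentioning only in passing, is to observe that $\phi$ is bounded for $\|\cdot\|_\infty$ (Russo--Dye, with constant $\|\phi(1)\|$) and for $\|\cdot\|_1$ (positivity plus trace preservation) and then invoke complex interpolation for noncommutative $L^p$ spaces; the direct argument is preferable here as it is elementary and self-contained.
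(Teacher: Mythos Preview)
Your proof is correct and follows essentially the same route as the paper: both apply $2$-positivity of $\phi$ to the rank-one block matrix $\begin{pmatrix}1\\X^*\end{pmatrix}\begin{pmatrix}1&X\end{pmatrix}$ and combine the resulting Kadison--Schwarz-type inequality with trace preservation to obtain $\|\phi(X)\|_2^2\le\|\phi(1)\|\,\|X\|_2^2$. The only cosmetic difference is that the paper extracts this bound via the Douglas factorization $\phi(X)=\phi(X^*X)^{1/2}K\,\phi(1)^{1/2}$ with $\|K\|\le 1$ and trace cyclicity, whereas you enlarge $\phi(1)$ to $\|\phi(1)\|\cdot 1$ and read off the Schur complement directly.
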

	\begin{proof}Let $X\in\mathcal{M}$. The block matrix \[\begin{pmatrix}X^*X & X^*\\X & 1\end{pmatrix}\] is positive and therefore, \[\begin{pmatrix}\phi(X^*X) & \phi(X)^*\\\phi(X) & \phi(1)\end{pmatrix}\] is positive. Choose $K\in\mathcal{N}$ such that $||K||\leq 1$ and $$\phi(X)=\phi(X^*X)^{1/2}K\phi(1)^{1/2}.$$ Then \[\begin{aligned}||\phi(X)||_2^2&=\tau^{\prime}(\phi(1)^{1/2}K^*\phi(X^*X)K\phi(1)^{1/2})\\&=\tau^{\prime}(K\phi(1)K^*\phi(X^*X))\\&\leq||\phi(1)||\,||X||_2^2\end{aligned}.\] \end{proof}
	Thus, $\phi$ extends to a bounded linear map from $L^2(\mathcal{M},\tau)$ to $L^2(\mathcal{N},\tau^{\prime})$. This enables taking adjoints. To show that $\phi^*$ is completely positive, we need some technical background.
	\medskip
	
	Recall that for any $n\in\mathbb{N}$, $M_n(\mathcal{M})$ is equipped with the natural trace \[\tau_n\begin{pmatrix}X_{11} & \dots & X_{1n}\\\vdots & &\vdots\\X_{n1} & \dots & X_{nn}\end{pmatrix}=\sum_j \tau(X_{jj}).\] $\phi$ extends naturally to a map $\phi_n:M_n(\mathcal{M})\to M_n(\mathcal{N})$ given by \[\phi_n\begin{pmatrix}X_{11} & \dots & X_{1n}\\\vdots & &\vdots\\X_{n1} & \dots & X_{nn}\end{pmatrix}= \begin{pmatrix}\phi(X_{11}) & \dots & \phi(X_{1n})\\\vdots & &\vdots\\\phi(X_{n1}) & \dots & \phi(X_{nn})\end{pmatrix}.\]
	This is again completely positive and trace preserving, so by Theorem \ref{t1}, bounded. The adjoint $\phi_n^*$ acts on $M_n(\mathcal{N})$ by \[\phi_n^*\begin{pmatrix}Y_{11} & \dots & Y_{1n}\\\vdots & &\vdots\\Y_{n1} & \dots & Y_{nn}\end{pmatrix}= \begin{pmatrix}\phi^*(Y_{11}) & \dots & \phi^*(Y_{1n})\\\vdots & &\vdots\\\phi^*(Y_{n1}) & \dots & \phi^*(Y_{nn})\end{pmatrix}.\] The reader might wonder whether the right hand side makes sense, since $\phi^*(Y_{ij})$ can be unbounded. To work this out, consider a sequence \[\tilde{Z}_m=\begin{pmatrix}Z_{11}^m & \dots & Z_{1n}^m\\\vdots & &\vdots\\Z_{n1}^m & \dots & Z_{nn}^m\end{pmatrix}\] such that $Z_{ij}^m \to \phi^*(Y_{ij})$ in $L^2(\mathcal{M},\tau)$ for all $i,j$. Then $\tilde{Z}_m$ is Cauchy in $L^2(M_n(\mathcal{M}))$ and the matrix \[\begin{pmatrix}\phi^*(Y_{11}) & \dots & \phi^*(Y_{1n})\\\vdots & &\vdots\\\phi^*(Y_{n1}) & \dots & \phi^*(Y_{nn})\end{pmatrix}\] simply denotes its $L^2$ limit. This is unique regardless of the choice of $\tilde{Z}_m$.
	\medskip
	
	With this, it suffices to show that $\phi^*$ is positive, since complete positivity follows by replacing $\mathcal{M}$ and $\mathcal{N}$ with $M_n(\mathcal{M})$ and $M_n(\mathcal{N})$ respectively. Henceforth, the inner product on $L^2(\mathcal{M},\tau)$ will be denoted by $\langle \underline{\hspace{4mm}  },\underline{\hspace{4mm} }\rangle_{\tau}$. This satisfies $\langle X, Y\rangle_{\tau}=\langle Y^*, X^*\rangle_{\tau}$ for all $X, Y\in L^2(\mathcal{M})$, a fact we use in the following theorem.
	\medskip
	
	\begin{thm}\label{t2} Let $\phi:(\mathcal{M},\tau)\to(\mathcal{N},\tau^{\prime})$ be CPTP. Then $\phi^*:L^2(\mathcal{N})\to L^2(\mathcal{M})$ is positive.\end{thm}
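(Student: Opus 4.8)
The plan is to express the $L^2$-adjoint through the trace pairing and reduce positivity of $\phi^*(Y)$, an element of $L^2(\mathcal{M},\tau)$, to a family of scalar inequalities obtained by pairing it with bounded positive operators. Fix $Y\in L^2(\mathcal{N},\tau')$ with $Y\geq 0$. By Theorem \ref{t1} the map $\phi^*$ is well defined and bounded, characterized by $\langle\phi(W),Y\rangle_{\tau'}=\langle W,\phi^*(Y)\rangle_\tau$ for all $W\in L^2(\mathcal{M},\tau)$; I only need this for bounded $W\in\mathcal{M}$, where $\phi(W)\in\mathcal{N}$ is again bounded and the pairing is a genuine trace, $\langle W,\phi^*(Y)\rangle_\tau=\tau'(\phi(W)^*Y)$.

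First I would record that $\phi$, being positive, is $*$-preserving, so $\phi(W^*)=\phi(W)^*$. Taking $W\in\mathcal{M}$ self-adjoint then gives $\langle W,\phi^*(Y)\rangle_\tau=\tau'(\phi(W)Y)$, which is real since $\phi(W)$ and $Y$ are self-adjoint and $\tau'$ is tracial; letting $W$ range over all self-adjoint elements of $\mathcal{M}$ and testing the self-adjoint imaginary part of $\phi^*(Y)$ against its own spectral projections, together with faithfulness of $\tau$, forces $\phi^*(Y)=\phi^*(Y)^*$. Next, for $W\in\mathcal{M}$ with $W\geq 0$, I would rewrite $\tau'(\phi(W)Y)=\tau'(\phi(W)^{1/2}Y\phi(W)^{1/2})$ via traciality and the noncommutative H\"older calculus, and note that $\phi(W)^{1/2}Y\phi(W)^{1/2}=(Y^{1/2}\phi(W)^{1/2})^*(Y^{1/2}\phi(W)^{1/2})\geq 0$, so $\langle W,\phi^*(Y)\rangle_\tau\geq 0$.

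To finish I would invoke the elementary fact that a self-adjoint $Z\in L^2(\mathcal{M},\tau)$ with $\tau(WZ)\geq 0$ for every $W\in\mathcal{M}$, $W\geq 0$, is positive: writing $Z=Z_+-Z_-$ by the spectral theorem and testing against the bounded support projection $P$ of $Z_-$ yields $0\leq\tau(PZ)=-\tau(Z_-)\leq 0$, so $\tau(Z_-)=0$ and hence, by faithfulness, $Z_-=0$. Applied to $Z=\phi^*(Y)$ this gives $\phi^*(Y)\geq 0$, and complete positivity of $\phi^*$ then follows, as already noted, by replacing $\mathcal{M},\mathcal{N},\phi$ with $M_n(\mathcal{M}),M_n(\mathcal{N}),\phi_n$. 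I expect the only real care to be the bookkeeping with unbounded operators — making sure products such as $\phi(W)^{1/2}Y\phi(W)^{1/2}$ genuinely lie in $L^1(\mathcal{N},\tau')$ so that $\tau'$ and its tracial property may be applied, and using $Y^{1/2}\in L^4$ — together with stating the positivity-via-pairing criterion precisely; the algebra itself is short once these points are settled.
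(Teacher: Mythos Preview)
Your proposal is correct and follows essentially the same route as the paper: first establish self-adjointness of $\phi^*(Y)$ through the trace pairing, then prove positivity by pairing against spectral projections of the negative part and invoking faithfulness of $\tau$. The only cosmetic differences are that the paper reduces at the outset to bounded positive $A\in\mathcal{N}$ (using density and continuity of $\phi^*$) and tests against the truncated projections $P_n=P(-n,-1/n)$, whereas you work directly with $Y\in L^2(\mathcal{N},\tau')$ and test against the single support projection of $Z_-$; both arguments are the same spectral-projection trick.
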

	\begin{proof}It suffices to prove that if $A\in\mathcal{N}$ is positive, $\phi^*(A)\in L^2(\mathcal{M})$ is also positive. Let us first show self-adjointness. Note that for any $X\in\mathcal{M}$, \[\begin{aligned}\langle \phi^*(A), X\rangle_{\tau}&=\langle A,\phi(X)\rangle_{\tau^{\prime}}\\&=\tau^{\prime}(A\phi(X)^*)\\&=\langle X^*,\phi^*(A)\rangle_{\tau}\\&=
		\langle \phi^*(A)^*, X\rangle_{\tau}.\end{aligned}\]
	Thus, $\phi^*(A)$ is self-adjoint. Consider the spectral decomposition $$\phi^*(A)=\int_{\mathbb{R}}\lambda\, dP(\lambda).$$ Since $\phi^*(A)$ is affiliated, for each Borel $E\subset\mathbb{R}$, $P(E)\in\mathcal{M}$. Let $P_n=P(-n, -\frac{1}{n})$ for $n\in\mathbb{N}$. Note that $\langle \phi^*(A), P_n\rangle_{\tau}=\langle A, \phi(P_n)\rangle_{\tau}\geq 0$ since $A$ is positive. But \[0\leq\langle \phi^*(A), P_n\rangle_{\tau}=\int_{-n}^{-1/n}\lambda\,d\tau(P(\lambda))\leq -\frac{1}{n}\tau(P_n)\leq 0.\] Since $\tau$ is faithful, $P_n=0$ for all $n$ and therefore, $P(-\infty, 0)=0$, thus completing the proof. \end{proof}
	\begin{cor}Let $B\in\mathcal{M}$ be positive and invertible. Let $\phi:(\mathcal{M},\tau)\to(\mathcal{N},\tau^{\prime})$ be a strict CPTP map. Then the Petz recovery map $\mathcal{R}:L^2(\mathcal{N})\to L^2(\mathcal{M})$ given by $\mathcal{R}(Y)=B^{1/2}\phi^*(\phi(B)^{-1/2}Y\phi(B)^{-1/2})B^{1/2}$ for all $Y\in L^2(\mathcal{N})$ is CPTP.\end{cor}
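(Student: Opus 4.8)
The plan is to factor the Petz map as a composition of three completely positive maps and then check trace preservation by a short adjoint computation. Write $\mathcal{R} = T_2 \circ \phi^* \circ T_1$, where $T_1 \colon L^2(\mathcal{N}) \to L^2(\mathcal{N})$ is $T_1(Y) = \phi(B)^{-1/2} Y \phi(B)^{-1/2}$ and $T_2 \colon L^2(\mathcal{M}) \to L^2(\mathcal{M})$ is $T_2(Z) = B^{1/2} Z B^{1/2}$. Since $\phi$ is strict, $\phi(B)$ is positive and invertible in $\mathcal{N}$, so $\phi(B)^{-1/2}$ is a \emph{bounded} element, and $B^{1/2}$ is bounded as well; hence $T_1$ and $T_2$ are bounded on the respective $L^2$ spaces (with norms at most $\|\phi(B)^{-1/2}\|^2$ and $\|B^{1/2}\|^2$), while $\phi^*$ is bounded by Theorem \ref{t1}. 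This makes $\mathcal{R}$ a well-defined bounded linear map $L^2(\mathcal{N}) \to L^2(\mathcal{M})$, with the products understood in the $L^2$-limit sense of the discussion preceding the corollary.

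For complete positivity I would show each factor is CP and use that the amplifications compose. Conjugation by a fixed bounded operator $C$ sends a positive $Y \in L^2$ (write $Y = Y^{1/2} Y^{1/2}$ with $Y^{1/2} \in L^4$) to $(CY^{1/2})(CY^{1/2})^* \geq 0$; at the $M_n$-level $T_1$ and $T_2$ are again conjugations, by the diagonal amplification of $\phi(B)^{-1/2}$, respectively of $B^{1/2}$, so both maps are completely positive. For $\phi^*$, Theorem \ref{t2} gives positivity; applying that theorem to the CPTP map $\phi_n \colon M_n(\mathcal{M}) \to M_n(\mathcal{N})$ shows $\phi_n^*$ is positive, and since $\phi_n^*$ is exactly the $n$-th amplification of $\phi^*$, the map $\phi^*$ is completely positive. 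Because $\mathcal{R}_n = (T_2)_n \circ \phi_n^* \circ (T_1)_n$ with each factor positive on $M_n$, we conclude $\mathcal{R}$ is completely positive.

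For trace preservation, fix $Y \in L^2(\mathcal{N})$. I would first note that $\tau(\mathcal{R}(Y)) = \tau\big(B\,\phi^*(\phi(B)^{-1/2} Y \phi(B)^{-1/2})\big)$, which is legitimate since the product of two elements of $L^2$ lies in $L^1$ and $\tau$ is tracial there; then rewrite this as the pairing $\langle \phi^*(T_1 Y), B\rangle_\tau$, and use self-adjointness of $B$ and $\phi(B)$ together with the definition of the adjoint to get $\langle \phi^*(T_1 Y), B\rangle_\tau = \langle T_1 Y, \phi(B)\rangle_{\tau'} = \tau'\big(\phi(B)^{-1/2} Y \phi(B)^{-1/2}\phi(B)\big)$; a final application of traciality gives $\tau'(\phi(B)^{-1/2} Y \phi(B)^{1/2}) = \tau'(Y)$. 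Hence $\mathcal{R}$ is CPTP. (Running the same bookkeeping with $Y = \phi(B)$ yields $\mathcal{R}(\phi(B)) = B^{1/2}\phi^*(1)B^{1/2} = B$, since $\phi^*(1) = 1$ by trace preservation of $\phi$, which recovers the remark that the Petz map fixes $B$.)

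I expect the only delicate point to be keeping the unbounded-operator manipulations honest: $\phi^*(T_1 Y)$ need not be bounded, so "multiplying on both sides by $B^{1/2}$" and "applying $\tau$" must be interpreted in the $L^2$ and $L^1$ completions exactly as set up before the corollary, rather than as operator products. Everything else is a direct consequence of Theorems \ref{t1} and \ref{t2} and the traciality of $\tau$ and $\tau'$.
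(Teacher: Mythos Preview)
Your proof is correct and is exactly the argument the paper sets up: the corollary is stated without proof immediately after Theorem~\ref{t2}, and the paragraph preceding that theorem already explains that complete positivity of $\phi^*$ follows from positivity of each $\phi_n^*$, so your factorisation $\mathcal{R}=T_2\circ\phi^*\circ T_1$ with conjugation maps $T_1,T_2$ and your adjoint computation for trace preservation are precisely what is intended. The care you take with the $L^2/L^1$ interpretation of the unbounded products matches the discussion the paper gives around~\eqref{e4}.
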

	\medskip
	
	The DPI for the sandwiched quasi-relative entropies was discussed by Jen\v{c}ov\'a in \cite{jencova}. We give a simple alternate proof in our setting for $\mathcal{S}_2$. This is a lot like Petz's proof of DPI for $f$-divergences \cite{dp}, modified enough to work in infinite dimensions. We start off with a couple of lemmas.
	\medskip
	
	\begin{lem}\label{l1}Let $B\in \mathcal{M}$ be positive and invertible. Let $\phi:(\mathcal{M},\tau)\to(\mathcal{N},\tau^{\prime})$ be strictly CPTP. Consider $V:L^2(\mathcal{N},\tau^{\prime})\to L^2(\mathcal{M},\tau)$ given by $$V(Y)=\phi^*(Y\phi(B)^{-1/2})B^{1/2}$$ for all $Y\in L^2(\mathcal{N})$. Then $V$ is a contraction.\end{lem}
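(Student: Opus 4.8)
The plan is to avoid the (possibly unbounded) operator $\phi^*(Y\phi(B)^{-1/2})$ entirely and to test $V(Y)$ against bounded elements. Since $\mathcal{M}$ is $\|\cdot\|_2$-dense in $L^2(\mathcal{M},\tau)$, it is enough to prove
\[|\langle V(Y),X\rangle_\tau|\le\|Y\|_2\,\|X\|_2\qquad\text{for all }Y\in L^2(\mathcal{N},\tau^{\prime}),\ X\in\mathcal{M}.\]
First I would unwind the left-hand side using bounded ingredients only. Writing $B^{1/2}X^*=(XB^{1/2})^*$ with $XB^{1/2}\in\mathcal{M}$, the defining adjoint relation $\langle\phi^*(Z),X^{\prime}\rangle_\tau=\langle Z,\phi(X^{\prime})\rangle_{\tau^{\prime}}$ (available because $\phi$ is $L^2$-bounded by Theorem~\ref{t1}), the identity $\langle A,B\rangle_\tau=\langle B^*,A^*\rangle_\tau$, and the $*$-preservation of $\phi$ combine to give
\[\langle V(Y),X\rangle_\tau=\big\langle\phi^*(Y\phi(B)^{-1/2}),\,XB^{1/2}\big\rangle_\tau=\tau^{\prime}\!\big(Y\,\phi(B)^{-1/2}\phi(B^{1/2}X^*)\big).\]
Cauchy--Schwarz in $L^2(\mathcal{N},\tau^{\prime})$ peels off $\|Y\|_2$, and a short computation identifies the remaining factor as $\big\|\phi(B)^{-1/2}\phi(B^{1/2}X^*)\big\|_2^2=\tau^{\prime}\!\big(\phi(W)\phi(B)^{-1}\phi(W)^*\big)$, where $W:=XB^{1/2}\in\mathcal{M}$. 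Thus the whole lemma comes down to the single inequality
\[\tau^{\prime}\!\big(\phi(W)\phi(B)^{-1}\phi(W)^*\big)\ \le\ \tau(WB^{-1}W^*)=\tau(XX^*)=\|X\|_2^2.\]

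The core is this last estimate, which I would establish by a block-matrix argument of the kind already used in the proof of Theorem~\ref{t1} (in the spirit of Petz's argument for the DPI, kept within $\mathcal{M}$). The element
\[\begin{pmatrix}WB^{-1}W^* & W\\ W^* & B\end{pmatrix}\in M_2(\mathcal{M})\]
is positive, since it equals $TT^*$ for the operator column $T$ with entries $WB^{-1/2}$ and $B^{1/2}$, and its entries are all bounded; hence the completely positive map $\phi_2:M_2(\mathcal{M})\to M_2(\mathcal{N})$ may be applied, yielding
\[\begin{pmatrix}\phi(WB^{-1}W^*) & \phi(W)\\ \phi(W)^* & \phi(B)\end{pmatrix}\ \ge\ 0 .\]
Since $\phi$ is strict, $\phi(B)$ is positive and invertible, so the Schur-complement inequality for positive $2\times2$ block operators gives $\phi(WB^{-1}W^*)\ge\phi(W)\phi(B)^{-1}\phi(W)^*$. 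Applying the (order-preserving) trace $\tau^{\prime}$ and using that $\phi$ preserves the trace,
\[\tau^{\prime}\!\big(\phi(W)\phi(B)^{-1}\phi(W)^*\big)\ \le\ \tau^{\prime}\!\big(\phi(WB^{-1}W^*)\big)=\tau(WB^{-1}W^*),\]
which is exactly what was needed; taking the supremum over $X$ completes the proof.

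I expect the real work to be bookkeeping rather than algebra: at each step one must check that the operator handed to $\phi$, to $\phi_2$, or to the Schur-complement lemma is a genuine element of $\mathcal{M}$ (respectively $M_2(\mathcal{M})$), not merely affiliated, so that the quoted facts apply verbatim, and that $\tau^{\prime}$ is finite on everything that appears. This is precisely why the computation is routed through bounded test elements $X\in\mathcal{M}$ and the bounded operators $B^{\pm1/2}$, $\phi(B)^{\pm1/2}$: with those choices $W$, $WB^{-1}W^*$, $\phi(W)$, $\phi(B)^{-1/2}\phi(B^{1/2}X^*)$ and both block matrices remain bounded. One further point worth recording is that $\langle V(Y),X\rangle_\tau$ is already well defined for every $Y\in L^2(\mathcal{N},\tau^{\prime})$ and $X\in\mathcal{M}$, so no closure argument is needed to set up the estimate; density of $\mathcal{M}$ in $L^2(\mathcal{M},\tau)$ is used only for the final supremum.
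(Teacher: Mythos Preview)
Your proof is correct and is essentially the same as the paper's. The paper computes $V^*=R_{\phi(B)^{-1/2}}\phi\,R_{B^{1/2}}$ explicitly and shows $\|V^*(X)\|_2\le\|X\|_2$ for $X\in\mathcal{M}$; your Cauchy--Schwarz step lands on exactly the same quantity $\tau'\!\big(\phi(XB^{1/2})\phi(B)^{-1}\phi(B^{1/2}X^*)\big)=\|V^*(X)\|_2^2$, and with $W=XB^{1/2}$ your $2\times2$ block matrix coincides with the paper's (since $WB^{-1}W^*=XX^*$), after which both arguments apply the same Schur-complement inequality.
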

	\begin{proof}Given any $X\in\mathcal{M}$ let $L_X$ and $R_X$ denote the left and right multiplication operators by $X$ respectively. Similarly for $\mathcal{N}$. Write $V=R_{B^{1/2}}\,\phi^*\, R_{\phi(B)^{-1/2}}$. Then $$V^*=R_{\phi(B)^{-1/2}}\,\phi\, R_{B^{1/2}}.$$ Note that $V^*$ maps $\mathcal{M}$ to $\mathcal{N}$.  Let $X\in\mathcal{M}$ and note that \[||V^*(X)||_2^2=\tau^{\prime}(\phi(XB^{1/2})\phi(B)^{-1}\phi(B^{1/2}X^*).\label{e7}\tag{7}\]
	The block matrix \[\begin{pmatrix}XX^* & XB^{1/2}\\B^{1/2}X^* & B\end{pmatrix}\] is positive and therefore,  \[\begin{pmatrix}\phi(XX^*) & \phi(XB^{1/2})\\\phi(B^{1/2}X^*) & \phi(B)\end{pmatrix}\,\geq\, O.\] This implies $$\phi(XB^{1/2})\phi(B)^{-1}\phi(B^{1/2}X^*)\leq\phi(XX^*).$$ Taking trace and using $\eqref{e7}$, $$||V^*(X)||_2\leq||X||_2.$$ Thus $V^*$, and hence, $V$, is a contraction. \end{proof}
	\medskip
	
	\begin{lem}\label{l2} Let $\phi:(\mathcal{M},\tau)\to(\mathcal{N},\tau^{\prime})$ be strictly CPTP. Let $\Delta:L^2(\mathcal{M})\to L^2(\mathcal{M})$ and $\Delta_0:L^2(\mathcal{N})\to L^2(\mathcal{N})$ be given by $\Delta=L_BR_{B^{-1}}$ and $\Delta_0=L_{\phi(B)}R_{\phi(B)^{-1}}$ respectively and let $V$ be as in lemma \ref{l1}. Then $V^*\Delta V\leq \Delta_0$. \end{lem}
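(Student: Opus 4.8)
The plan is to reduce the claimed operator inequality to the statement that one explicitly computable operator is an $L^2$-contraction. First I would record that, because $B\in\mathcal{M}$ and $\phi(B)\in\mathcal{N}$ are positive and invertible, both $\Delta=L_BR_{B^{-1}}$ and $\Delta_0=L_{\phi(B)}R_{\phi(B)^{-1}}$ are bounded, positive, and boundedly invertible, with square roots $\Delta^{1/2}=L_{B^{1/2}}R_{B^{-1/2}}$ and $\Delta_0^{-1/2}=L_{\phi(B)^{-1/2}}R_{\phi(B)^{1/2}}$; here I use that left and right multiplications commute and that $L_XL_Y=L_{XY}$, $R_XR_Y=R_{YX}$. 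Set $W:=\Delta^{1/2}\,V\,\Delta_0^{-1/2}$, which is bounded by Lemma \ref{l1} together with the boundedness just noted. Using the factorization $V=R_{B^{1/2}}\phi^*R_{\phi(B)^{-1/2}}$ and the cancellations $R_{B^{-1/2}}R_{B^{1/2}}=I$ and $R_{\phi(B)^{1/2}}R_{\phi(B)^{-1/2}}=I$, a short bookkeeping computation collapses $W$ to
$$W=L_{B^{1/2}}\,\phi^*\,L_{\phi(B)^{-1/2}},$$
with adjoint $W^*=L_{\phi(B)^{-1/2}}\,\phi\,L_{B^{1/2}}:L^2(\mathcal{M})\to L^2(\mathcal{N})$.

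Next I would prove that $W^*$ is a contraction. Since $\mathcal{M}$ is dense in $L^2(\mathcal{M},\tau)$, it suffices to bound $\|W^*(X)\|_2$ for $X\in\mathcal{M}$, where $W^*(X)=\phi(B)^{-1/2}\phi(B^{1/2}X)$ involves only bounded elements. Expanding the $L^2$-norm and using that $\phi$ is $*$-preserving gives
$$\|W^*(X)\|_2^2=\tau'\big(\phi(X^*B^{1/2})\,\phi(B)^{-1}\,\phi(B^{1/2}X)\big).$$
The key input is $2$-positivity of $\phi$: the block matrix $\left(\begin{smallmatrix}X^*X & X^*B^{1/2}\\ B^{1/2}X & B\end{smallmatrix}\right)$ is positive (it has the form $Z^*Z$), so applying the completely positive map $\phi_2$ and taking the Schur complement with respect to the invertible corner $\phi(B)$ — here strictness of $\phi$ enters — yields
$$\phi(X^*B^{1/2})\,\phi(B)^{-1}\,\phi(B^{1/2}X)\ \leq\ \phi(X^*X).$$
Applying $\tau'$ and using trace preservation, $\|W^*(X)\|_2^2\leq\tau'(\phi(X^*X))=\tau(X^*X)=\|X\|_2^2$, so $W^*$, and hence $W$, is a contraction.

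Finally, $\|W\|\leq 1$ is equivalent to $W^*W\leq I$ on $L^2(\mathcal{N})$; substituting $W=\Delta^{1/2}V\Delta_0^{-1/2}$ and using self-adjointness of $\Delta^{1/2}$ and $\Delta_0^{-1/2}$, this reads $\Delta_0^{-1/2}\,V^*\Delta V\,\Delta_0^{-1/2}\leq I$. Conjugating by the bounded positive operator $\Delta_0^{1/2}$ gives $V^*\Delta V\leq\Delta_0$, which is the assertion.

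The main obstacle I anticipate is not conceptual but organizational: keeping straight which multiplication operators act on $L^2(\mathcal{M})$ versus $L^2(\mathcal{N})$ during the reduction to $W$, and verifying that every power of $\Delta$ and $\Delta_0$ that appears is genuinely bounded — which is exactly where one needs $B$ and $\phi(B)$ to be bounded and invertible rather than merely affiliated. Once $W$ is identified, the remainder is the familiar Schur-complement / Kadison--Schwarz argument transplanted to the tracial $L^2$ setting, in the spirit of Petz's proof of the DPI.
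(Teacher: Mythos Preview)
Your argument is correct and lands on the same key step as the paper---contractivity of the operator $L_{\phi(B)^{-1/2}}\,\phi\,L_{B^{1/2}}$, established via $2$-positivity and a Schur complement---but the reductions are organized differently. You introduce $W=\Delta^{1/2}V\Delta_0^{-1/2}$, simplify it to $L_{B^{1/2}}\phi^*L_{\phi(B)^{-1/2}}$, and bound $\|W^*\|$ using the block matrix $\left(\begin{smallmatrix}X^*X & X^*B^{1/2}\\ B^{1/2}X & B\end{smallmatrix}\right)$, exactly mirroring the proof of Lemma~\ref{l1} with left multiplications in place of right. The paper instead first strips off the right multiplications to reduce to $\phi L_B\phi^*\leq L_{\phi(B)}$, then invokes the duality $\phi L_B\phi^*\leq L_{\phi(B)}\iff \phi^*L_{\phi(B)^{-1}}\phi\leq L_{B^{-1}}$, and proves the latter via the block matrix $\left(\begin{smallmatrix}X^*B^{-1}X & X^*\\ X & B\end{smallmatrix}\right)$. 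Your route is slightly more economical---it avoids the intermediate ``iff'' step and the appearance of $B^{-1}$ in the block matrix---while the paper's version makes the connection to the relative-modular-operator inequality $\phi^*L_{\phi(B)^{-1}}\phi\leq L_{B^{-1}}$ explicit, which is closer in form to Petz's original DPI argument.
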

	\begin{proof}Note that $$V^*\Delta V=R_{\phi(B)^{-1/2}}\,\phi\, L_{B}\,\phi^*\, R_{\phi(B)^{-1/2}}.$$ Hence, $V^*\Delta V\leq \Delta_0$ is equivalent to $\phi\,L_b\,\phi^*\leq L_{\phi(B)}$. This happens if and only if $\phi^*L_{\phi(B)^{-1}}\phi\leq L_{B^{-1}}$. To prove this we note that for any $X\in\mathcal{M}$, \[\begin{aligned}&\begin{pmatrix}X^*B^{-1}X & X^*\\X & B\end{pmatrix}\geq O\\&\implies\begin{pmatrix}\phi(X^*B^{-1}X) & \phi(X)^*\\\phi(X) & \phi(B)\end{pmatrix}\geq O.\end{aligned}\] This implies $$\tau^{\prime}(\phi(X)^*\phi(B)^{-1}\phi(X))\leq\tau(X^*B^{-1}X),$$ which means $$\langle \phi^*\,L_{\phi(B)^{-1}}\phi\, X, X\rangle_{\tau}\leq \langle L_{B^{-1}}X, X\rangle_{\tau}$$ for all $X\in\mathcal{M}$, completing the proof.\end{proof}
	\medskip
	
	Given $p\geq 1$ and a positive, invertible $B\in(\mathcal{M},\tau)$ consider the Araki-Masuda $p$-norm on $\mathcal{M}$ given by $$||X||_{B, p}=||B^{-1/2q}XB^{-1/2q}||_p$$ for all $X\in\mathcal{M}$, where $q$ is the harmonic conjugate of $p$. This is equivalent to the $L^p$ norm on $\mathcal{M}$, and extends to an equivalent norm on $L^p(\mathcal{M})$. Let $\tau(B)=1$. Then for any positive $A\in L^p(\mathcal{M})$ with $\tau(A)=1$, $$\mathcal{S}_p(A|B)=||A||_{B,p}^p.$$ For $p=2$, $||.||_{B,2}$ gives a Hilbert space norm on $L^2(\mathcal{M})$ induced by the inner product $$\langle X, Y\rangle_B=\langle X, B^{-1/2}YB^{-1/2}\rangle_{\tau}.$$  
	\medskip
	
	Note that if $\phi:\mathcal{M}\to\mathcal{N}$ is strictly CPTP, its adjoint when seen as a map from $(L^2(\mathcal{M}),\, ||.||_{B,2})$ to $(L^2(\mathcal{N}),\,||.||_{\phi(B), 2})$ is precisely the Petz recovery map (see $\eqref{e4}$) induced by $B$. As we see later, this is a crucial observation.
	\medskip
	
	For now, let us complete the proof of DPI for $\mathcal{S}_2$.
	\medskip
	
	\begin{thm}\label{t3}Let $\phi:(\mathcal{M},\tau)\to(\mathcal{N},\tau^{\prime})$ be strictly CPTP and let $B\in\mathcal{M}$ such that $B$ is positive, invertible and $\tau(B)=1$. Then for all $X\in\mathcal{M}$ $||\phi(X)||_{\phi(B),\,2}\leq||X||_{B,\,2}$.\end{thm}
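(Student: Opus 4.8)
The plan is to recast the inequality as a contraction bound for the Petz map and then derive it from Lemmas~\ref{l1} and~\ref{l2} by an operator-convexity interpolation. As noted just before the statement, when $L^2(\mathcal{M})$ and $L^2(\mathcal{N})$ are equipped with the Araki--Masuda inner products $\langle\cdot,\cdot\rangle_B$ and $\langle\cdot,\cdot\rangle_{\phi(B)}$, the adjoint of $\phi$ is exactly the Petz map $\mathcal{R}$ of \eqref{e4}. Both $\|\cdot\|_{B,2}$ and $\|\cdot\|_{\phi(B),2}$ are equivalent to the respective $L^2$-norms, so $\phi$ is bounded between the weighted Hilbert spaces (Theorem~\ref{t1}) and $\|\phi\|=\|\mathcal{R}\|$ as operator norms between them. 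Since $\mathcal{M}$ is dense in $L^2(\mathcal{M})$, the asserted inequality $\|\phi(X)\|_{\phi(B),2}\le\|X\|_{B,2}$ for all $X\in\mathcal{M}$ is therefore equivalent to showing $\|\mathcal{R}(Y)\|_{B,2}\le\|Y\|_{\phi(B),2}$ for all $Y\in L^2(\mathcal{N})$.

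Next I would unwind this in terms of the operator $V$ from Lemma~\ref{l1}. Writing $L_X,R_X$ for left and right multiplication and using that left and right multiplications commute, one checks the factorization $\mathcal{R}=L_{B^{1/2}}\,V\,L_{\phi(B)^{-1/2}}$. Substituting $Z=\phi(B)^{-1/2}Y$ (a bijection of $L^2(\mathcal{N})$, as $\phi(B)$ is bounded and invertible by strictness of $\phi$) gives $\mathcal{R}(Y)=B^{1/2}V(Z)$, and a short computation with traciality yields
\[
\|\mathcal{R}(Y)\|_{B,2}^2=\tau\big(V(Z)^*B^{1/2}V(Z)B^{-1/2}\big)=\langle V(Z),\Delta^{1/2}V(Z)\rangle_\tau=\langle Z,\,V^*\Delta^{1/2}V\,Z\rangle_\tau,
\]
where I use that $L_{B^{1/2}}R_{B^{-1/2}}$ is a positive operator with square $\Delta=L_BR_{B^{-1}}$, hence equals $\Delta^{1/2}$; in the same way $\|Y\|_{\phi(B),2}^2=\langle Z,\Delta_0^{1/2}Z\rangle_\tau$ with $\Delta_0=L_{\phi(B)}R_{\phi(B)^{-1}}$. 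Thus contractivity of $\mathcal{R}$ is equivalent to the single operator inequality $V^*\Delta^{1/2}V\le\Delta_0^{1/2}$.

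Finally I would establish $V^*\Delta^{1/2}V\le\Delta_0^{1/2}$ by interpolating between the two exponents supplied by the lemmas. Because $B$ and $\phi(B)$ are bounded and invertible, $\Delta$ and $\Delta_0$ are bounded, positive, invertible operators on the respective Hilbert spaces, so their fractional powers are unambiguous and no domain subtleties arise. By Lemma~\ref{l1} the map $V$ is a contraction; since $f(t)=\sqrt t$ is operator concave on $[0,\infty)$ with $f(0)=0$, the Hansen--Pedersen form of Jensen's operator inequality gives $V^*\Delta^{1/2}V\le(V^*\Delta V)^{1/2}$. By Lemma~\ref{l2} we have $V^*\Delta V\le\Delta_0$, and operator monotonicity of $t\mapsto\sqrt t$ then gives $(V^*\Delta V)^{1/2}\le\Delta_0^{1/2}$. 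Chaining these two inequalities proves $V^*\Delta^{1/2}V\le\Delta_0^{1/2}$, hence the theorem.

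The main obstacle is this last step. Lemmas~\ref{l1} and~\ref{l2} only give the endpoint comparisons $V^*\Delta^0V\le\Delta_0^0$ and $V^*\Delta^1V\le\Delta_0^1$, and because $\mathcal{S}_2$ sits at the ``square-root'' exponent of the Araki--Masuda scale, what is actually needed is the midpoint comparison $V^*\Delta^{1/2}V\le\Delta_0^{1/2}$; getting there requires the operator concavity and operator monotonicity of $t\mapsto\sqrt t$ rather than any scalar inequality. Everything else --- the factorization $\mathcal{R}=L_{B^{1/2}}VL_{\phi(B)^{-1/2}}$, the identity $L_{B^{1/2}}R_{B^{-1/2}}=\Delta^{1/2}$, and the passage through the adjoint --- is routine manipulation with the tracial inner product.
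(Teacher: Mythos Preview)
Your proposal is correct and follows essentially the same route as the paper: both reduce the statement to the single operator inequality $V^*\Delta^{1/2}V\le\Delta_0^{1/2}$ and then obtain it from Lemmas~\ref{l1} and~\ref{l2} via operator concavity/monotonicity of $t\mapsto\sqrt t$. The only cosmetic difference is in the reduction step: the paper passes through a Schur-complement/block-matrix equivalence to reach \eqref{e8} (which is the same inequality after conjugating by $R_{\phi(B)^{1/2}}$), whereas you dualize to the Petz map $\mathcal{R}$ and compute directly; your write-up also spells out the Hansen--Pedersen step that the paper compresses into the phrase ``by operator concavity of the square root.''
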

	\begin{proof}This is equivalent to showing that \[\langle \phi^*L_{\phi(B)^{-1/2}}R_{\phi(B)^{-1/2}}\phi(X), X\rangle_{\tau}\leq\langle L_{B^{-1/2}}R_{B^{-1/2}} X, X\rangle_{\tau}\] for all $X\in\mathcal{M}$, which is the same as saying the block matrix \[\begin{pmatrix}L_{B^{-1/2}}R_{B^{-1/2}} & \phi^*\\ \phi & L_{\phi(B)^{1/2}}R_{\phi(B)^{1/2}}\end{pmatrix}\] is positive as an operator on $L^2(\mathcal{M},\tau)\oplus L^2(\mathcal{N},\tau^{\prime})$. But this is equivalent to \[\phi\, L_{B^{1/2}}R_{B^{1/2}}\,\phi^*\leq L_{\phi(B)^{1/2}}R_{\phi(B)^{1/2}}.\label{e8}\tag{8}\]  Recall the operators $V$, $\Delta$ and $\Delta_0$ from lemmas \ref{l1} and \ref{l2}. Note that by operator concavity of the square root, \[\Delta_0^{1/2}\geq V^*\Delta^{1/2}V\] Plugging in the expressions of $V$, $\Delta$ and $\Delta_0$ in terms of left and right multiplication operators, we get $\eqref{e8}$.
	\end{proof}
	\medskip
	
	Theorem \ref{t3} states any strict CPTP map between tracial von-Neumann algebras is a contraction with respect to the Araki-Masuda $L^2$-norms. A simple, important fact about contractions on a Hilbert space will be instrumental in the proof of our main result, and we state it as the next lemma.
	\medskip
	
	\begin{lem}\label{l3} Let $\mathcal{H}$ and $\mathcal{K}$ be Hilbert spaces and let $T:\mathcal{H}\to\mathcal{K}$ be a contraction. Then for any $x\in\mathcal{H}$, $$||x-T^*Tx||^2\leq ||x||^2-||Tx||^2.$$\end{lem}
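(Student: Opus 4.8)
The plan is to reduce everything to a statement about the single positive operator $S=T^*T$. Since $T$ is a contraction we have $\|S\|=\|T^*T\|=\|T\|^2\le 1$, so $S$ is self-adjoint with $0\le S\le I$. The quantity on the left of the claimed inequality is $\|x-Sx\|^2$, and the quantity on the right is $\|x\|^2-\langle Tx,Tx\rangle=\|x\|^2-\langle Sx,x\rangle$. So the whole lemma is an inequality purely about $S$.

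Next I would simply expand the left-hand side using self-adjointness of $S$:
\[
\|x-Sx\|^2=\|x\|^2-2\langle Sx,x\rangle+\langle Sx,Sx\rangle=\|x\|^2-2\langle Sx,x\rangle+\langle S^2x,x\rangle.
\]
Subtracting this from $\|x\|^2-\langle Sx,x\rangle$, the desired inequality $\|x-Sx\|^2\le\|x\|^2-\langle Sx,x\rangle$ becomes equivalent to
\[
\langle S^2x,x\rangle\le\langle Sx,x\rangle,\qquad\text{i.e.}\qquad \langle (S-S^2)x,x\rangle\ge 0.
\]

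Finally I would observe that $S-S^2=S(I-S)$ is a positive operator: writing it as $S^{1/2}(I-S)S^{1/2}$ (legitimate since $S\ge 0$), it is manifestly positive because $I-S\ge 0$ by the contraction hypothesis. Hence $\langle (S-S^2)x,x\rangle\ge 0$ for every $x\in\mathcal{H}$, which is exactly what was needed. There is no real obstacle here — the only thing to be careful about is keeping track of which operator is self-adjoint and using $\|Tx\|^2=\langle T^*Tx,x\rangle$ rather than trying to manipulate $T$ directly; once the problem is phrased in terms of $S=T^*T$ the argument is a two-line computation plus the elementary fact that the product of two commuting positive operators is positive.
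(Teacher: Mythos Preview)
Your proof is correct and is essentially the same as the paper's: both expand $\|x-T^*Tx\|^2$ and reduce the inequality to $(T^*T)^2\le T^*T$, which holds because $0\le T^*T\le I$. The only difference is cosmetic---you introduce the abbreviation $S=T^*T$ and spell out $S-S^2=S^{1/2}(I-S)S^{1/2}\ge 0$, whereas the paper simply cites $(T^*T)^2\le T^*T$ directly.
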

	\begin{proof}Note that \[\begin{aligned}||x-T^*Tx||^2&=||x||^2+\langle (T^*T)^2x, x\rangle-2||Tx||^2\\&\leq ||x||^2-||Tx||^2\end{aligned}\] since $(T^*T)^2\leq T^*T$.\end{proof}
	As an immediate application, we get :
	\begin{thm}\label{t4}Let $\phi:(\mathcal{M},\tau)\to(\mathcal{N},\tau^{\prime})$ be a strict CPTP map and let $B\in\mathcal{M}$ be positive and invertible with $\tau(B)=1$. Then for any positive $A\in L^2(\mathcal{M})$ with $\tau(A)=1$, \[||A-\mathcal{R}(\phi(A))||_{B,\,2}^2\leq\mathcal{S}_2(A|B)-\mathcal{S}_2(\phi(A)|\phi(B))\] where $\mathcal{R}$ is the Petz recovery map $\eqref{e4}$.\end{thm}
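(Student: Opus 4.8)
The plan is to read Theorem~\ref{t4} off from three ingredients already assembled: Theorem~\ref{t3}, the identification of $\phi^{*}$ with the Petz map in the Araki--Masuda inner products, and the elementary Lemma~\ref{l3}. First I would set up the two Hilbert spaces $\mathcal{H}=(L^{2}(\mathcal{M},\tau),\langle\cdot,\cdot\rangle_{B})$ and $\mathcal{K}=(L^{2}(\mathcal{N},\tau^{\prime}),\langle\cdot,\cdot\rangle_{\phi(B)})$; since $\phi$ is strictly CPTP, $\phi(B)$ is positive and invertible, so $\langle\cdot,\cdot\rangle_{\phi(B)}$ is a genuine inner product inducing a norm equivalent to $\|\cdot\|_{2}$ on $L^{2}(\mathcal{N})$. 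By Theorem~\ref{t3}, $\phi$ is a contraction from $\mathcal{H}$ to $\mathcal{K}$ on the dense subspace $\mathcal{M}$, hence extends to a contraction $T:\mathcal{H}\to\mathcal{K}$.

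Next I would invoke the observation recorded just before Theorem~\ref{t3}: the Hilbert-space adjoint of $T$ relative to $\langle\cdot,\cdot\rangle_{B}$ and $\langle\cdot,\cdot\rangle_{\phi(B)}$ is exactly the Petz recovery map $\mathcal{R}$ of $\eqref{e4}$. If a line of justification is wanted, one simply computes, for $X\in\mathcal{M}$ and $Y\in\mathcal{N}$, that $\langle\phi(X),Y\rangle_{\phi(B)}=\langle X,\phi^{*}(\phi(B)^{-1/2}Y\phi(B)^{-1/2})\rangle_{\tau}=\langle X,\mathcal{R}(Y)\rangle_{B}$, using that $\phi^{*}$ is the $\tau$-adjoint of $\phi$. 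Thus $T^{*}=\mathcal{R}$, so $T^{*}Tx=\mathcal{R}(\phi(x))$ for every $x\in L^{2}(\mathcal{M})$. Taking $x=A$, which lies in $L^{2}(\mathcal{M},\tau)$ by hypothesis, Lemma~\ref{l3} applies and yields
\[
\|A-\mathcal{R}(\phi(A))\|_{B,2}^{2}\;\le\;\|A\|_{B,2}^{2}-\|\phi(A)\|_{\phi(B),2}^{2}.
\]

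It then remains only to translate the right-hand side into entropies. Since $A$ is positive with $\tau(A)=1$, one has $\|A\|_{B,2}^{2}=\mathcal{S}_{2}(A|B)$; and since $\phi$ is positive and trace preserving, $\phi(A)$ is positive with $\tau^{\prime}(\phi(A))=\tau(A)=1$, so $\|\phi(A)\|_{\phi(B),2}^{2}=\mathcal{S}_{2}(\phi(A)|\phi(B))$. Substituting gives the claimed bound. I do not expect any serious obstacle here: the genuine work lies upstream, in Theorem~\ref{t3} (operator concavity of the square root together with the two matrix-positivity Lemmas~\ref{l1} and~\ref{l2}) and in the adjoint identification. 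The only points demanding care are the functional-analytic bookkeeping --- that the Araki--Masuda norm indeed extends to an equivalent norm on all of $L^{2}$, that $\phi$ and $\mathcal{R}$ restrict to bounded maps $L^{2}\to L^{2}$ (Theorem~\ref{t1} and its corollary), and that $\phi(A)$ remains positive with unit trace, so that $\mathcal{S}_{2}(\phi(A)|\phi(B))$ really equals $\|\phi(A)\|_{\phi(B),2}^{2}$ rather than merely bounding it.
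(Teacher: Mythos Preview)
Your proposal is correct and is exactly the paper's argument, just written out in full: the paper's proof is the single line ``Note that $\mathcal{R}$ is the adjoint of $\phi$ with respect to the Araki--Masuda $2$-norms induced by $B$ and $\phi(B)$ and apply Lemma~\ref{l3},'' and you have faithfully unpacked this, including the identification $T^{*}=\mathcal{R}$ and the translation $\|A\|_{B,2}^{2}=\mathcal{S}_{2}(A|B)$, $\|\phi(A)\|_{\phi(B),2}^{2}=\mathcal{S}_{2}(\phi(A)|\phi(B))$.
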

	\begin{proof}Note that $\mathcal{R}$ is the adjoint of $\phi$ with respect to the Araki-Masuda $2$-norms induced by $B$ and $\phi(B)$ and apply lemma \ref{l3}. \end{proof}
	\medskip
	
	Theorem \ref{t4} already gives us recoverability : $\mathcal{S}_2(A|B)=\mathcal{S}_2(\phi(A)|\phi(B))$ if and only if $\mathcal{R}(\phi(A))=A$. To get inequality $\eqref{e5}$, we need one more result.
	\medskip
	
	\begin{thm}\label{t5} For any $X\in L^2(\mathcal{M})$ and positive, invertible $B$ with $\tau(B)=1$, we have $||X||_1^2\leq||X||_{B,2}^2$.\end{thm}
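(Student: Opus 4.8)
The plan is to reduce everything to the single inequality $||X||_1\le ||X||_{B,2}$ and then square. Since here $p=q=2$, the Araki--Masuda norm is $||X||_{B,2}=||B^{-1/4}XB^{-1/4}||_2$, and because $B$ is bounded and invertible, $B^{\pm 1/4}\in\mathcal{M}$; hence $Y:=B^{-1/4}XB^{-1/4}$ again lies in $L^2(\mathcal{M},\tau)$ and $X=B^{1/4}YB^{1/4}\in L^2(\mathcal{M},\tau)\subset L^1(\mathcal{M},\tau)$. First I would take the polar decomposition $X=U|X|$ with $U\in\mathcal{M}$ a partial isometry (part of $X$ being affiliated to $\mathcal{M}$), so that $||X||_1=\tau(|X|)=\tau(U^*X)$, a nonnegative real number. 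Inserting the factors of $B^{\pm1/4}$ and using cyclicity of the (finite) trace on products of a bounded operator with an $L^2$-element, I would rewrite $\tau(U^*X)=\tau\big((B^{1/4}U^*B^{1/4})\,Y\big)$, i.e.\ as the $L^2$ inner product of $Y$ with $(B^{1/4}U^*B^{1/4})^*$.

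Next, Cauchy--Schwarz in the Hilbert space $L^2(\mathcal{M},\tau)$ gives $||X||_1\le ||B^{1/4}UB^{1/4}||_2\,||Y||_2$, and $||Y||_2=||X||_{B,2}$ by definition. So the whole statement reduces to the uniform estimate $||B^{1/4}ZB^{1/4}||_2\le ||Z||_\infty$ for every $Z\in\mathcal{M}$, given $\tau(B)=1$; applying it with $Z=U$ (which has $||U||_\infty\le1$) finishes the argument.

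For this last estimate I would expand $||B^{1/4}ZB^{1/4}||_2^2=\tau(B^{1/2}Z^*B^{1/2}Z)$ and --- this is the one genuinely careful point --- instead of crudely replacing $B^{1/2}$ by $||B^{1/2}||_\infty$ (which is far too lossy), recognise this as the inner product $\langle B^{1/2}Z,\,ZB^{1/2}\rangle_\tau$ and apply Cauchy--Schwarz once more: $||B^{1/4}ZB^{1/4}||_2^2\le ||B^{1/2}Z||_2\,||ZB^{1/2}||_2$. Each factor is then handled directly: $||B^{1/2}Z||_2^2=\tau(Z^*BZ)=\tau(BZZ^*)\le ||Z||_\infty^2\,\tau(B)=||Z||_\infty^2$ since $ZZ^*\le ||Z||_\infty^2 I$ and $B\ge0$, and symmetrically $||ZB^{1/2}||_2^2=\tau(BZ^*Z)\le ||Z||_\infty^2$. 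Hence $||B^{1/4}ZB^{1/4}||_2\le ||Z||_\infty$, and squaring the resulting $||X||_1\le ||X||_{B,2}$ gives the theorem. (The same conclusion follows in one line from the three-factor noncommutative H\"older inequality $||B^{1/4}YB^{1/4}||_1\le ||B^{1/4}||_4\,||Y||_2\,||B^{1/4}||_4=\tau(B)^{1/2}||Y||_2$, but I prefer the Cauchy--Schwarz route to keep the paper self-contained.)

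The only real obstacle here is bookkeeping rather than depth: one must check that each product $B^{\pm1/4}\cdot(\text{unbounded }L^2\text{ element})\cdot B^{\pm1/4}$ genuinely lands in $L^2$ (resp.\ $L^1$) and that trace-cyclicity is legitimate on such products --- all routine in the tracial setting --- together with the observation flagged above that the two copies of $B^{1/2}$ must be kept "split across" a Cauchy--Schwarz inequality and not estimated by the operator norm. Combined with Theorem \ref{t4}, this immediately yields inequality $\eqref{e5}$, since $||A-\mathcal{R}(\phi(A))||_1^2\le ||A-\mathcal{R}(\phi(A))||_{B,2}^2\le \mathcal{S}_2(A|B)-\mathcal{S}_2(\phi(A)|\phi(B))$.
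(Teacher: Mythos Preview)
Your argument is correct. Both proofs reduce to showing $\lvert\tau(WX)\rvert\le\|X\|_{B,2}$ for a suitable contraction $W\in\mathcal{M}$, but they reach it by different routes. The paper works with the operator inequality $L_{B^{-1/2}}R_{B^{-1/2}}\ge 2(L_B+R_B)^{-1}$ (operator AM--GM on the commuting pair $L_B,R_B$), inserts $(L_B+R_B)^{\pm1/2}$ into the pairing $\langle X,U\rangle_\tau$ for an arbitrary unitary $U$, applies Cauchy--Schwarz once, and then takes the supremum over $U$. You instead fix $U$ from the polar decomposition of $X$, substitute $Y=B^{-1/4}XB^{-1/4}$, and apply Cauchy--Schwarz twice: first to split off $\|Y\|_2=\|X\|_{B,2}$, and then to control $\|B^{1/4}UB^{1/4}\|_2^2=\tau(B^{1/2}U^*B^{1/2}U)$ by $\|B^{1/2}U\|_2\|UB^{1/2}\|_2\le\tau(B)=1$. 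Your route avoids the AM--GM step entirely and is slightly more elementary; the paper's route, on the other hand, makes visible the operator-mean structure underlying the Araki--Masuda norm. Your parenthetical H\"older observation $\|B^{1/4}YB^{1/4}\|_1\le\|B^{1/4}\|_4^2\,\|Y\|_2=\tau(B)^{1/2}\|Y\|_2$ is in fact the cleanest of the three and could replace the double Cauchy--Schwarz if one is willing to invoke noncommutative H\"older.
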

	\begin{proof}Enough to prove for $X\in\mathcal{M}$. Note that $||X||_{B,2}^2\ \langle L_{B^{-1/2}}R_{B^{-1/2}}\, X, X\rangle_{\tau}.$ By operator A.M-G.M inequality and anti-monotonicity of the inverse, $$L_{B^{-1/2}}R_{B^{-1/2}}\geq 2(L_B+R_B)^{-1}.$$ So \[||X||_{B,\, 2}^2\geq 2\langle (L_B+R_B)^{-1}X, X\rangle_{\tau}.\label{e9}\tag{9}\] Then for any unitary $U\in\mathcal{M}$, \[\begin{aligned}|\langle X, U\rangle_{\tau}|^2 &=|\langle (L_B+R_B)^{-1/2}X, (L_B+R_B)^{1/2}U\rangle_{\tau}|^2\\&\leq 2\langle (L_B+R_B)^{-1}X, X\rangle_{\tau}\,\frac{\langle BU+UB, U\rangle_{\tau}}{2}\end{aligned}\] by Cauchy-Schwarz. By $\eqref{e9}$ and cyclicity of trace, $|\langle X, U\rangle_{\tau}|^2\leq ||X||_{B,\,2}^2\tau(B)=||X||_{B,\,2}^2.$ Taking supremum over unitaries $U$, $$||X||_1=\sup \{|\langle X,U\rangle|_{\tau}:U\,\mathrm{unitary}\}\leq||X||_{B,2}.$$ \end{proof}
	\medskip
	
	\begin{cor}\label{c2}Let $\phi:(\mathcal{M},\tau)\to(\mathcal{N},\tau^{\prime})$ be strictly CPTP and let $B\in\mathcal{M}$ be positive, invertible with $\tau(B)=1$. Then for any positive $A\in L^2(\mathcal{M})$ with $\tau(A)=1$, $$||A-\mathcal{R}\circ\phi(A)||^2_1\leq [\mathcal{S}_2(A|B)-\mathcal{S}_2(\phi(A)|\phi(B))].$$\end{cor}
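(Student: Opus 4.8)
The plan is to chain together the two main estimates already established, namely Theorem \ref{t4} (the Araki--Masuda $2$-norm recoverability bound) and Theorem \ref{t5} (the inequality $\|X\|_1^2 \le \|X\|_{B,2}^2$ on $L^2(\mathcal{M})$). The single genuine point requiring care before this chaining is legitimate is that $X := A - \mathcal{R}(\phi(A))$ actually lies in $L^2(\mathcal{M})$, since Theorem \ref{t5} is phrased for elements of $L^2(\mathcal{M})$ only.

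First I would verify this membership. By hypothesis $A \in L^2(\mathcal{M})$. By Theorem \ref{t1}, $\phi$ extends to a bounded operator $L^2(\mathcal{M}) \to L^2(\mathcal{N})$, so $\phi(A) \in L^2(\mathcal{N})$; and the Petz map $\mathcal{R}$ of \eqref{e4} maps $L^2(\mathcal{N})$ into $L^2(\mathcal{M})$ (this is exactly the content of the Corollary following Theorem \ref{t2}, identifying $\mathcal{R}$ with the adjoint of $\phi$ with respect to the Araki--Masuda inner products). Hence $\mathcal{R}(\phi(A)) \in L^2(\mathcal{M})$ and therefore $A - \mathcal{R}(\phi(A)) \in L^2(\mathcal{M})$, as needed. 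In particular the left-hand side $\|A - \mathcal{R}(\phi(A))\|_1$ is finite, because $L^2(\mathcal{M}) \subset L^1(\mathcal{M})$.

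Next I would simply apply Theorem \ref{t5} to $X = A - \mathcal{R}(\phi(A))$, obtaining
\[
\|A - \mathcal{R}(\phi(A))\|_1^2 \;\le\; \|A - \mathcal{R}(\phi(A))\|_{B,2}^2,
\]
and then invoke Theorem \ref{t4}, which gives
\[
\|A - \mathcal{R}(\phi(A))\|_{B,2}^2 \;\le\; \mathcal{S}_2(A|B) - \mathcal{S}_2(\phi(A)|\phi(B)).
\]
Concatenating the two displays yields the claimed inequality.

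There is essentially no obstacle here: the work has all been front-loaded into Theorems \ref{t4} and \ref{t5} (and, upstream, into Theorems \ref{t1}--\ref{t3} and Lemmas \ref{l1}--\ref{l3}, which supply $L^2$-boundedness of $\phi$, positivity of $\phi^*$, the DPI, and the contraction identity $\|x - T^*Tx\|^2 \le \|x\|^2 - \|Tx\|^2$). The only thing I would be careful to state explicitly, for the infinite-dimensional reader, is the $L^2$-membership of $A - \mathcal{R}(\phi(A))$ that makes the application of Theorem \ref{t5} valid; everything else is immediate.
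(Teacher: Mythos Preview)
Your proposal is correct and matches the paper's own proof, which simply states that the corollary ``follows directly from Theorems \ref{t4} and \ref{t5}.'' Your added remark verifying $A - \mathcal{R}(\phi(A)) \in L^2(\mathcal{M})$ is a reasonable bit of housekeeping but not a departure from the paper's argument.
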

	\begin{proof}Follows directly from Theorems \ref{t4} and \ref{t5}.\end{proof}
	\medskip
	
	\begin{rem2}Let $\psi_{A}$ and $\psi_{\mathcal{R}(\phi(A))}$ denote the normal states on $\mathcal{M}$ corresponding to $A$ and $\mathcal{R}(\phi(A))$ respectively. Then $||\psi(A)-\psi_{\mathcal{R}(\phi(A))}||=||A-\mathcal{R}\circ\phi(A)||_1$. Applying corollary \ref{c2}, $||\psi(A)\,-\,\psi_{\mathcal{R}(\phi(A))}||^2\leq [\mathcal{S}_2(A|B)-\mathcal{S}_2(\phi(A)|\phi(B))]$. This shows that the states $\psi_A$ and $\psi_{\mathcal{R}(\phi(A))}$ are close in the functional norm whenever the relative entropy difference $[\mathcal{S}_2(A|B)-\mathcal{S}_2(\phi(A)|\phi(B))]$ is small. \end{rem2}
	\medskip
	
	The transition probability $P(\psi_1|\psi_2)$ between two states $\psi_1$ and $\psi_2$ on a $C^*$-algebra $\mathcal{A}$ is defined by $\eqref{e6}$. This was introduced by Uhlmann in \cite{uhl2}, and studied further in \cite{alb, uhal}. The {\it fidelity}  between $\psi_1$ and $\psi_2$ is defined as $$F(\psi_1|\psi_2)=\sqrt{P(\psi_1|\psi_2)}.$$ Let $\psi_1$ and $\psi_2$ be normal states on a tracial von-Neumann algebra $(\mathcal{M},\tau)$ given by $\psi_1(X)=\tau(AX)$ and $\psi_2(X)=\tau(BX)$ for all $X\in\mathcal{M}$, where $A, B\in L^1(\mathcal{M})$ are positive with $\tau(A)=\tau(B)=1$. It follows from Uhlmann's work \cite{uhl2} that \[F(\psi_1|\psi_2)=\sup_{U\in\mathcal{M},\, U\, \mathrm{unitary}} |\langle UA^{1/2}, B^{1/2}\rangle_{\tau}|.\label{e10}\tag{10}\]
	\medskip
	
	The following recoverability bound is an immediate consequence :
	\medskip
	
	\begin{thm}\label{t6}Let $\phi:(\mathcal{M},\tau)\to(\mathcal{N},\tau^{\prime})$ be strictly CPTP. Let $B\in \mathcal{A}$ be positive and invertible. Then for any positive $A\in L^2(\mathcal{M})$ with $\tau(A)=1$, \[4[1-F(\psi_{A}|\psi_{\,\mathcal{R}(\phi(A)}))]^2\leq||A-\mathcal{R}\circ\phi(A)||^2_1\leq [\mathcal{S}_2(A|B)-\mathcal{S}_2(\phi(A)|\phi(B))]\] where $\psi_A$ and $\psi_{\,\mathcal{R}(\phi(A)})$ are the normal states corresponding to $A$ and $\mathcal{R}(\phi(A))$ respectively.\end{thm}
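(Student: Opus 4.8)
The right-hand inequality of Theorem~\ref{t6} is precisely Corollary~\ref{c2}: writing $C:=\mathcal{R}(\phi(A))$, the Corollary's assertion that $\mathcal{R}$ is CPTP makes $C$ a positive element of $L^2(\mathcal{M})\subset L^1(\mathcal{M})$ with $\tau(C)=\tau^{\prime}(\phi(A))=\tau(A)=1$, so $\psi_C=\psi_{\mathcal{R}(\phi(A))}$ is a genuine normal state, formula \eqref{e10} applies to the pair $\psi_A,\psi_C$, and $\|A-C\|_1^2\le \mathcal{S}_2(A|B)-\mathcal{S}_2(\phi(A)|\phi(B))$. Since $F(\psi_A|\psi_C)\le 1$ by Cauchy--Schwarz in \eqref{e10}, the remaining (left-hand) inequality $4[1-F(\psi_A|\psi_C)]^2\le\|A-C\|_1^2$ is equivalent to the Fuchs--van de Graaf--type bound
\[2\bigl(1-F(\psi_A|\psi_C)\bigr)\le\|A-C\|_1,\]
and this is what I would prove; the whole argument runs through the fidelity formula \eqref{e10}.

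The first step is a lower bound on the fidelity obtained by dropping the supremum in \eqref{e10}: taking $U=1$ gives $F(\psi_A|\psi_C)\ge|\langle A^{1/2},C^{1/2}\rangle_\tau|=\tau(A^{1/2}C^{1/2})$, which is real and nonnegative by cyclicity of $\tau$. Hence, using $\tau(A)=\tau(C)=1$,
\[2\bigl(1-F(\psi_A|\psi_C)\bigr)\le 2-2\tau(A^{1/2}C^{1/2})=\tau(A)+\tau(C)-2\tau(A^{1/2}C^{1/2})=\|A^{1/2}-C^{1/2}\|_2^2,\]
so the theorem reduces to the Powers--Størmer inequality $\|A^{1/2}-C^{1/2}\|_2^2\le\|A-C\|_1$ for the positive $L^2$-elements $A$ and $C$.

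For the latter I would argue as follows. Put $D:=A^{1/2}-C^{1/2}\in L^4(\mathcal{M})$, which is self-adjoint, with $|D|\in L^4(\mathcal{M})$ and $\mathrm{sgn}(D)\in\mathcal{M}$, $\|\mathrm{sgn}(D)\|\le 1$. From the identity $A-C=A^{1/2}D+DC^{1/2}$ (valid in $L^1(\mathcal{M})$), cyclicity of $\tau$, and $D\,\mathrm{sgn}(D)=\mathrm{sgn}(D)\,D=|D|$, one computes $\tau\bigl(\mathrm{sgn}(D)(A-C)\bigr)=\tau\bigl(|D|(A^{1/2}+C^{1/2})\bigr)$, so by the duality $\|X\|_1=\sup\{|\tau(XY)|:Y\in\mathcal{M},\,\|Y\|\le 1\}$ we get $\|A-C\|_1\ge\tau\bigl(|D|(A^{1/2}+C^{1/2})\bigr)$. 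It then suffices to check $\tau\bigl(|D|(A^{1/2}+C^{1/2})\bigr)\ge\tau(D^2)$: writing $D=D_+-D_-$ one has $A^{1/2}+C^{1/2}-|D|=2(C^{1/2}-D_-)$, and since $D_+D_-=0$ and $\tau(D_+C^{1/2})\ge 0$ this reduces to $\tau(D_-C^{1/2})\ge\tau(D_-^2)$; the last inequality holds because $D_-=QC^{1/2}Q-QA^{1/2}Q$ for $Q$ the spectral projection of $D$ on $(-\infty,0)$, so $QC^{1/2}Q\ge D_-\ge 0$ and $\tau\bigl(D_-(QC^{1/2}Q-D_-)\bigr)\ge 0$ by positivity of $\tau$ on products of positive elements. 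Chaining this with Corollary~\ref{c2} proves the theorem.

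The routine pieces here — the fidelity lower bound and the algebraic core of Powers--Størmer — are short; the place requiring care is the bookkeeping with unbounded affiliated operators, ensuring at each step that the relevant products lie in the non-commutative $L^p$-space on which $\tau$ is finite (so $D,D_\pm\in L^4$, their products with $A^{1/2},C^{1/2}$ lie in $L^2$, and $A-C\in L^1$) so that the identity $A-C=A^{1/2}D+DC^{1/2}$, cyclicity of $\tau$, and the trace–norm duality are all legitimate. That is the main obstacle; once the membership bookkeeping is in place, the rest is elementary.
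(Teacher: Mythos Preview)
Your proof is correct and follows the same route as the paper: the right-hand inequality is Corollary~\ref{c2}, and the left-hand one is obtained by bounding $2(1-F)$ above by $\|A^{1/2}-C^{1/2}\|_2^2$ via \eqref{e10} with $U=1$, then invoking the Powers--St\o rmer inequality. The only difference is that the paper simply cites Powers--St\o rmer from \cite{aude,pst}, whereas you supply a self-contained proof; your argument (computing $\tau(\mathrm{sgn}(D)(A-C))$ and reducing to $\tau(D_-C^{1/2})\ge\tau(D_-^2)$ via $QC^{1/2}Q\ge D_-$) is correct, and your remarks on the $L^p$-bookkeeping for the unbounded operators are apt.
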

	\begin{proof}The right hand side is just corollary \ref{c2}. To prove the left hand side, note that for any two $X, Y\in L^1(\mathcal{M})$ such that $X$ and $Y$ are positive with $\tau(X)=\tau(Y)=1$, $$||X^{1/2}-Y^{1/2}||_2^2\leq||X-Y||_1$$ by the Powers-St\o rmer inequality (see \cite{aude, pst}). But by $\eqref{e10}$, \[\begin{aligned}2[1-F(\psi_X|\psi_Y)]&=\inf_{U\in\mathcal{M},\, U\, \mathrm{unitary}}||X^{1/2}-UY^{1/2}||_2^2\\&\leq ||X^{1/2}-Y^{1/2}||^2\\&\leq ||X-Y||_1\end{aligned}\] where $\psi_X$ and $\psi_Y$ are the normal states corresponding to $X$ and $Y$ respectively. Squaring both sides, we are done.\end{proof}
	\medskip
	
	This gives the required generalization of inequality $\eqref{e3}$ to infinite dimensions.
	\medskip

	\section{Appendix}
	
	Here we present a brief discussion on the Uhlmann fidelity for two states on a $C^*$-algebra $\mathcal{A}$. Recall $\eqref{e6}$, where the {\it transition probability} for two states $\psi_1,\,\psi_2:\mathcal{A}\to\mathbb{C}$ is defined as \[
	\begin{aligned}P(\psi_1|\psi_2)
		&:= \sup_{\pi\in\mathrm{Hom}(\mathcal{A}, B(\mathcal{H}))}\{|\langle x, y\rangle|^2:\psi_1(A)=\langle\pi(A)x,x\rangle,\,\\&\psi_2(A)=\langle\pi(A)y, y\rangle,\, A\in\mathcal{A},\,||x||=||y||=1 \}.
	\end{aligned}\]
	
	The fidelity is given by $F(\psi_1|\psi_2)=\sqrt{P(\psi_1|\psi_2)}$. For density matrices $A$ and $B$, this reduces to the usual fidelity $\mathrm{tr}|A^{1/2}B^{1/2}|$. 
	\medskip
	
	It is easy to see that $F$ is symmetric in its arguments and $$0\leq F(\psi_1|\psi_2)\leq 1.$$ Alberti obtained a concrete expression for $F(\psi_1|\psi_2)$ in \cite{alb}, where he showed that if $\pi:\mathcal{A}\to B(\mathcal{H})$ is a representation such that $\psi_1(A)=\langle\pi(A)x, x\rangle$ and $\psi_2(A)=\langle\pi(A)y, y\rangle$ for unit vectors $x,y\in\mathcal{H}$, \[F(\psi_1|\psi_2)=\sup_{U\in\pi(\mathcal{A})^{\prime},\, U\, \mathrm{unitary}} |\langle x, Uy\rangle|.\label{e11}\tag{11}\] Here, $\pi(\mathcal{A})^{\prime}$ is the commutant of $\pi(\mathcal{A})$ in $B(\mathcal{H})$. The following is an immediate consequence.
	\begin{thm}\label{t7}Let $\mathcal{A}$ be a $C^*$-algebra and let $S(\mathcal{A})$ denote the state space of $\mathcal{A}$. Then the map $d:S(\mathcal{A})\times S(\mathcal{A})\to[0,\infty]$ given by $d(\psi_1,\psi_2)=\cos^{-1} F(\psi_1|\psi_2)$ is a metric on $S(\mathcal{A})$.\end{thm}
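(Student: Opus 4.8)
The plan is to verify the four axioms of a metric for $d(\psi_1,\psi_2)=\cos^{-1}F(\psi_1|\psi_2)$. Since the text already records $0\le F(\psi_1|\psi_2)\le1$ and the symmetry of $F$, finiteness (indeed $d$ takes values in $[0,\pi/2]$) and $d(\psi_1,\psi_2)=d(\psi_2,\psi_1)$ are immediate. For the identity of indiscernibles I would argue directly from the definition \eqref{e6}: if $\psi_1=\psi_2$, any representation $\pi$ realizing $\psi_1$ by a unit vector $x$ realizes $\psi_2$ by the same $x$, so $P(\psi_1|\psi_2)\ge|\langle x,x\rangle|^2=1$ and $d=0$; conversely, if $d(\psi_1,\psi_2)=0$ then $F(\psi_1|\psi_2)=1$, so by \eqref{e6} there are representations $\pi_n:\mathcal{A}\to B(\mathcal{H}_n)$ and unit vectors $x_n,y_n$ with $\psi_1(A)=\langle\pi_n(A)x_n,x_n\rangle$, $\psi_2(A)=\langle\pi_n(A)y_n,y_n\rangle$ and $|\langle x_n,y_n\rangle|\to1$. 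Choosing phases $\alpha_n$ so that $z_n:=\alpha_n y_n$ satisfies $\|x_n-z_n\|^2=2(1-|\langle x_n,y_n\rangle|)\to0$, and using $\|\pi_n(A)\|\le\|A\|$ together with $\langle\pi_n(A)z_n,z_n\rangle=\langle\pi_n(A)y_n,y_n\rangle$, the estimate $|\psi_1(A)-\psi_2(A)|\le2\|A\|\,\|x_n-z_n\|\to0$ gives $\psi_1=\psi_2$.

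The substance is the triangle inequality, and the plan is to realize all three states in a single representation and invoke Alberti's formula \eqref{e11}. Working in the universal representation $\pi$ of $\mathcal{A}$, in which every state is a vector state, I would choose unit vectors $x_1,x_2,x_3$ with $\psi_i(A)=\langle\pi(A)x_i,x_i\rangle$. By \eqref{e11}, for each pair $F(\psi_i|\psi_j)=\sup_U|\langle x_i,Ux_j\rangle|$ with $U$ ranging over unitaries in $\pi(\mathcal{A})'$, hence $d(\psi_i,\psi_j)=\inf_U\cos^{-1}|\langle x_i,Ux_j\rangle|$. Fixing $\epsilon>0$, pick unitaries $U,V\in\pi(\mathcal{A})'$ with $\cos^{-1}|\langle x_1,Ux_2\rangle|<d(\psi_1,\psi_2)+\epsilon$ and $\cos^{-1}|\langle x_2,Vx_3\rangle|<d(\psi_2,\psi_3)+\epsilon$; since $U$ is unitary $|\langle Ux_2,UVx_3\rangle|=|\langle x_2,Vx_3\rangle|$, and since $UV\in\pi(\mathcal{A})'$ is unitary $d(\psi_1,\psi_3)\le\cos^{-1}|\langle x_1,UVx_3\rangle|$. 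Thus the triangle inequality reduces, on letting $\epsilon\to0$, to the Fubini--Study inequality
\[\theta(u,w)\le\theta(u,v)+\theta(v,w)\]
for the angle $\theta(u,v):=\cos^{-1}|\langle u,v\rangle|$ between unit vectors of a Hilbert space, applied to $u=x_1$, $v=Ux_2$, $w=UVx_3$.

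To finish I would prove that inequality elementarily. Since $\theta$ takes values in $[0,\pi/2]$ one may assume $\theta(u,v)+\theta(v,w)<\pi/2$; write $u=\lambda_1 v+u^{\perp}$ and $w=\lambda_2 v+w^{\perp}$ with $u^{\perp},w^{\perp}\perp v$, so $|\lambda_1|=\cos\theta(u,v)$, $\|u^{\perp}\|=\sin\theta(u,v)$ and similarly for $\lambda_2,w^{\perp}$. The cross terms vanish because $u^{\perp},w^{\perp}\perp v$, so $\langle u,w\rangle=\mu+\langle u^{\perp},w^{\perp}\rangle$ with $|\mu|=\cos\theta(u,v)\cos\theta(v,w)$, and Cauchy--Schwarz gives
\[|\langle u,w\rangle|\ \ge\ |\mu|-\|u^{\perp}\|\,\|w^{\perp}\|\ =\ \cos\theta(u,v)\cos\theta(v,w)-\sin\theta(u,v)\sin\theta(v,w)\ =\ \cos\!\big(\theta(u,v)+\theta(v,w)\big)\ \ge\ 0,\]
whence $\theta(u,w)=\cos^{-1}|\langle u,w\rangle|\le\theta(u,v)+\theta(v,w)$. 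The one genuinely delicate point is the step invoking \eqref{e11} for all three pairs in a common representation: one must make sure the chosen representation (the universal one, as in \cite{alb}, or the GNS representation of $\tfrac13(\psi_1+\psi_2+\psi_3)$ together with the Radon--Nikodym vectors $x_i$ for states, which lie in $\pi(\mathcal{A})'$ applied to the cyclic vector) is large enough for Alberti's formula to be valid; everything else is routine Hilbert-space geometry.
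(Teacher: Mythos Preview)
Your proof is correct and follows essentially the same route as the paper: realize all three states in a common representation, apply Alberti's formula \eqref{e11} to write each $d(\psi_i,\psi_j)$ as an infimum over unitaries in the commutant, and reduce to the Fubini--Study triangle inequality. The paper's argument is the same modulo bookkeeping (it uses $U_1,U_2$ and $W=U_1U_2^*$ where you use $U,V$ and $UV$). Your write-up is somewhat more self-contained: you give a direct sequential argument for $F(\psi_1|\psi_2)=1\Rightarrow\psi_1=\psi_2$ where the paper simply cites Uhlmann, you are explicit about using the universal representation (the paper just says ``choose a representation $\pi$'' without justification), and you supply an elementary proof of the Fubini--Study inequality which the paper invokes as known.
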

	\begin{proof}Symmetry is obvious. Also, $d(\psi_1,\psi_2)=0$ if and only if $F(\psi_1|\psi_2)=1$, so from Uhlmann \cite{uhl2}, it follows that $\psi_1=\psi_2$.
		\medskip
		
	For the triangle inequality, consider three states $\psi_1,\psi_2,\psi_3$. Choose a representation $\pi:\mathcal{A}\to B(\mathcal{H})$ and unit vectors $x, y, v\in\mathcal{H}$ such that $\psi_1(A)=\langle\pi(A)x, x\rangle$, $\psi_2(A)=\langle \pi(A)y, y\rangle$, $\psi_3(A)=\langle \pi(A)v, v\rangle$ for all $A\in\mathcal{A}$.
	\medskip
	
	By $\eqref{e11}$, $$d(\psi_1,\psi_2)=\inf_{U\in\pi(\mathcal{A})^{\prime},\, U\, \mathrm{unitary}} \cos^{-1} |\langle x, Uy\rangle|.$$ Obtain similar expressions for $d(\psi_1,\psi_3)$ and $d(\psi_2,\psi_3)$ respectively. Let $U_1, U_2$ be unitaries in $\pi(\mathcal{A})^{\prime}$ and let $W=U_1U_2^*$. Then $W\in\pi(\mathcal{A})^{\prime}$.
	\medskip
	
	Then \[\begin{aligned}&\cos^{-1}|\langle x, U_1v\rangle|+\cos^{-1}|\langle y, U_2v\rangle|\\&=\cos^{-1}|\langle x, U_1v\rangle|+\cos^{-1}|\langle Wy,\, U_1v\rangle|.\end{aligned} \] Note that for unit vectors $v_1$ and $v_2$, $\cos^{-1}|\langle v_1,v_2\rangle|$ denotes the Fubini-Study distance between the rank one projections $v_1\otimes v_1^*$ and $v_2\otimes v_2^*$, which is a metric. Thus, \[\begin{aligned}&\cos^{-1}|\langle x, U_1v\rangle|+\cos^{-1}|\langle y, U_2v\rangle|\\&=\cos^{-1}|\langle x, U_1v\rangle|+\cos^{-1}|\langle Wy,\, U_1v\rangle|\\&\geq\cos^{-1}|\langle x, Wy\rangle|\\&\geq\cos^{-1} F(\psi_1|\psi_2).\end{aligned}\] Taking infimum over unitaries $U_1,\,U_2\in\pi(\mathcal{A})^{\prime}$, we are done.\end{proof}
	\medskip
	
	\begin{rem2}The metric in Theorem \ref{t7} is the $C^*$-algebra analogue of the {\it Bures angle} between density matrices, arising as the geodesic distance with respect to the SLD Fisher information (see \cite{jenc}).\end{rem2}
	\medskip
	
	Uhlmann in \cite{uhl2} showed that $P(\psi_1|\psi_2)$ is concave under {\it Gibbsian mixtures}, which means for three states $\psi_1,\psi_2,\psi:\mathcal{A}\to\mathbb{C}$ and any $\lambda\in[0,1]$, $$P(\lambda\psi_1+(1-\lambda)\psi_2|\psi)\geq \lambda P(\psi_1|\psi)+(1-\lambda)P(\psi_2|\psi).$$ This demonstrates concavity in each argument. 
	\medskip
	
	 Alberti also proved that for any unital completely positive $\phi:\mathcal{A}\to\mathcal{B}$, and states $\psi_1,\psi_2:\mathcal{B}\to\mathbb{C}$, \[F(\psi_1\circ\phi|\psi_2\circ\phi)\geq F(\psi_1|\psi_2).\label{e12}\tag{12}\]
	\medskip
	
	It is to be noted that $\eqref{e12}$ is equivalent to joint concavity in the matrix case (see \cite{uhl}). However, it is not obvious in the general $C^*$-algebraic setup since we cannot substitute the action of $\phi$ with a CPTP map on density matrices. Nevertheless, Farenick and Rahaman \cite{far} proved joint concavity in the setting of tracial $C^*$-algebras, using a variational formula for the fidelity similar to the finite dimensional case.
	\medskip
	
	Here, we offer an elementary proof of joint concavity of the fidelity in an arbitrary $C^*$-algebra. To the best of our knowledge, this has not appeared explicitly before in literature, though the result itself is common folklore. The technique is adapted from Uhlmann's original proof of separate concavity of $P(\psi_1|\psi_2)$ in \cite{uhl2}.
	\medskip
	
	\begin{thm} The fidelity $F(\psi_1|\psi_2)$ is jointly concave in the state space of a $C^*$-algebra $\mathcal{A}$.\end{thm}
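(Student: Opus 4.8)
The plan is to adapt Uhlmann's direct-sum construction, which underlies the separate concavity he proved in \cite{uhl2}, and to squeeze joint concavity out of it with the help of Alberti's formula $\eqref{e11}$. It suffices to show that for any two pairs of states $\psi_1,\psi_2$ and $\phi_1,\phi_2$ on $\mathcal{A}$ and any $\lambda\in[0,1]$,
\[
F\big(\lambda\psi_1+(1-\lambda)\phi_1\,\big|\,\lambda\psi_2+(1-\lambda)\phi_2\big)\ \geq\ \lambda F(\psi_1|\psi_2)+(1-\lambda)F(\phi_1|\phi_2).
\]

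First I would fix a representation $\pi:\mathcal{A}\to B(\mathcal{H})$ with unit vectors $x,y\in\mathcal{H}$ realizing $\psi_1,\psi_2$ as vector states, and a representation $\rho:\mathcal{A}\to B(\mathcal{K})$ with unit vectors $u,v\in\mathcal{K}$ realizing $\phi_1,\phi_2$; concretely one may take the GNS representations of the positive functionals $\psi_1+\psi_2$ and $\phi_1+\phi_2$, in which each dominated state appears as a vector state by the standard Radon--Nikodym argument. Next, form the direct sum $\sigma=\pi\oplus\rho$ on $\mathcal{H}\oplus\mathcal{K}$ together with the unit vectors $\xi=\sqrt{\lambda}\,x\oplus\sqrt{1-\lambda}\,u$ and $\eta=\sqrt{\lambda}\,y\oplus\sqrt{1-\lambda}\,v$. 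A one-line computation gives $\langle\sigma(\cdot)\xi,\xi\rangle=\lambda\psi_1+(1-\lambda)\phi_1$ and $\langle\sigma(\cdot)\eta,\eta\rangle=\lambda\psi_2+(1-\lambda)\phi_2$, so $\sigma$ is a common representation for the mixed pair and $\eqref{e11}$ applies to it.

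Now I would apply $\eqref{e11}$ inside $\sigma$: the left-hand side equals $\sup\{|\langle\xi,W\eta\rangle|:W\in\sigma(\mathcal{A})'\text{ unitary}\}$. Since any block-diagonal unitary $U\oplus U'$ with $U\in\pi(\mathcal{A})'$ and $U'\in\rho(\mathcal{A})'$ lies in $\sigma(\mathcal{A})'$, this supremum dominates $\sup|\lambda\langle x,Uy\rangle+(1-\lambda)\langle u,U'v\rangle|$. Given $\varepsilon>0$, applying $\eqref{e11}$ to $\pi$ and to $\rho$ separately I can pick $U$ and $U'$ with $|\langle x,Uy\rangle|>F(\psi_1|\psi_2)-\varepsilon$ and $|\langle u,U'v\rangle|>F(\phi_1|\phi_2)-\varepsilon$, and then multiply $U,U'$ by suitable unimodular scalars (always in the respective commutants, and harmless for the vector states) so that both inner products become nonnegative reals. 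Then $\lambda\langle x,Uy\rangle+(1-\lambda)\langle u,U'v\rangle$ is a nonnegative real exceeding $\lambda F(\psi_1|\psi_2)+(1-\lambda)F(\phi_1|\phi_2)-\varepsilon$, hence equals its own modulus, and letting $\varepsilon\to0$ yields the inequality; joint concavity follows.

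The only genuine point needing care is that $\eqref{e11}$ is being used for the reducible representation $\sigma$ rather than a GNS representation, but this is legitimate since $\eqref{e11}$ is stated for an arbitrary representation in which both states are vector states; it is precisely the inclusion $\pi(\mathcal{A})'\oplus\rho(\mathcal{A})'\subseteq\sigma(\mathcal{A})'$, together with the phase adjustment, that converts the two separate Alberti optimizations into a single one. Everything else — existence of the common representations and the fact that a state dominated by $\psi_1+\psi_2$ is a vector state in its GNS space — is routine, so I do not expect a serious obstacle beyond this bookkeeping.
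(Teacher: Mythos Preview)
Your proof is correct and follows essentially the same direct-sum-plus-phase-adjustment strategy as the paper. The only cosmetic difference is that you invoke Alberti's formula \eqref{e11} and optimize over block-diagonal unitaries in the commutant of the direct sum, whereas the paper optimizes directly via the definitional supremum \eqref{e6} over representations and representing vectors; both routes amount to the same computation.
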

	\begin{proof}Let $\psi_1,\psi_2$ and $\rho_1, \rho_2$ be pairs of states on $\mathcal{A}$ and let $\lambda\in [0,1]$. We need to show that \[F(\lambda\psi_1+(1-\lambda)\psi_2|\lambda\rho_1+(1-\lambda)\rho_2)\geq \lambda F(\psi_1|\rho_1)+(1-\lambda)F(\psi_2|\rho_2).\] Choose representations $\pi_1:\mathcal{A}\to B(\mathcal{H})$ and $\pi_2:\mathcal{A}\to B(\mathcal{K})$, and unit vectors $x_1, y_1\in\mathcal{H}$, $x_2, y_2\in\mathcal{K}$ such that $$\psi_1(A)=\langle \pi_1(A)x_1, x_1\rangle,\,\rho_1(A)=\langle \pi_1(A)y_1, y_1\rangle$$ and $$\psi_2(A)=\langle \pi_2(A)x_2, x_2\rangle,\,\rho_2(A)=\langle \pi_2(A)y_2, y_2\rangle$$ for all $A\in\mathcal{A}$.
		Consider the direct sum $\tilde{\pi}=\pi_1\oplus\pi_2$ and vectors $$\tilde{x}=\begin{pmatrix}\sqrt{\lambda}x_1\\[3pt] \sqrt{1-\lambda}x_2\end{pmatrix}$$, $$\tilde{y}=\begin{pmatrix}\sqrt{\lambda}y_1\\[3pt] \sqrt{1-\lambda}y_2\end{pmatrix}$$ in $\mathcal{H}\oplus\mathcal{K}$. Note that $$|\langle \tilde{x},\tilde{y}\rangle|=|\lambda\langle x_1, y_1\rangle+(1-\lambda)\langle x_2,y_2\rangle|.$$ Now $$\lambda \psi_1(A)+(1-\lambda)\psi_2(A)=\langle (\pi_1\oplus\pi_2)(A)\tilde{x}, \tilde{x}\rangle$$ and $$\lambda \rho_1(A)+(1-\lambda)\rho_2(A)=\langle (\pi_1\oplus\pi_2)(A)\tilde{y}, \tilde{y}\rangle.$$ Hence by $\eqref{e6}$, \[|\lambda\langle x_1, y_1\rangle+(1-\lambda)\langle x_2,y_2\rangle|\leq F(\lambda\psi_1+(1-\lambda)\psi_2|\lambda\rho_1+(1-\lambda)\rho_2).\label{e13}\tag{13}\] Note that the expressions for $\lambda\psi_1+(1-\lambda)\psi_2$ and $\lambda\rho_1+(1-\lambda)\rho_2$ $\eqref{e13}$ do not change if we replace $x_1$ with $e^{i\theta_1}x_1$ and $x_2$ with $e^{i\theta_2}x_2$ for any $\theta_1,\,\theta_2\in\mathbb{R}$. Hence, $\eqref{e13}$ becomes \[|\lambda\,e^{i\theta_1}\langle x_1, y_1\rangle+(1-\lambda)\,e^{i\theta_2}\langle x_2,y_2\rangle|\leq F(\lambda\psi_1+(1-\lambda)\psi_2|\lambda\rho_1+(1-\lambda)\rho_2).\] Maximizing over $\theta_1,\,\theta_2$, $$\lambda|\langle x_1, y_1\rangle|+(1-\lambda)|\langle x_2,y_2\rangle|\leq F(\lambda\psi_1+(1-\lambda)\psi_2|\lambda\rho_1+(1-\lambda)\rho_2).$$ This holds for any pair of GNS representations $\pi_1$ and $\pi_2$, so taking supremum over all such choices and using $\eqref{e6}$, \[F(\lambda\psi_1+(1-\lambda)\psi_2|\lambda\rho_1+(1-\lambda)\rho_2)\geq \lambda F(\psi_1|\rho_1)+(1-\lambda)F(\psi_2|\rho_2).\]
	\end{proof}
	\begin{acknowledgement}
		I thank my PhD supervisor Prof. Tanvi Jain for her comments and suggestions to improve the paper. 
	\end{acknowledgement}
	\medskip
	
	\textbf{Conflict of interest :} The author declares no conflict of interest.
	\medskip
	
	\textbf{Data availability statement :} No available data has been used.
	\medskip
	
	\textbf{Funding :} I thank Indian Statistical Institute for the financial support during my PhD via their institute fellowship.
		\bibliographystyle{amsplain}
		
	\end{document}